\def\notshow#1\notshowend{} %
\newcommand{\C}{\mathcal{C}}
\def\bb#1\eb{\textcolor{blue}{#1}} 
\def\br#1\er{\textcolor{red}{#1}} %
\def\bm#1\em{\textcolor{magenta}{#1}} %
\newcommand{\R}{\mathds R}
\newcommand{\n}{\mathrm{n}}
\newcommand{\ok}{O^{(k)}(4,\R)}
\newcommand{\ah}{\mathbf{a_h}}
\newcommand{\tah}{\mathbf{{\tilde a}_h}}
\newcommand{\av}{\mathbf{a_v}}
\newcommand{\tav}{\mathbf{\tilde{a}_v}}
\newcommand{\ttt}{\mathbf{t}}
\newtheorem{thm}{Theorem}[section]
\newtheorem{prop}[thm]{Proposition}
\newtheorem{lemma}[thm]{Lemma}
\theoremstyle{definition}
\newtheorem{defi}[thm]{Definition}
\newtheorem{exe}[thm]{Example}
\newtheorem{rem}[thm]{Remark}
\newtheorem{note}[thm]{Note}
\newtheorem{postulate}{Postulate}
\newtheorem{exer}{Exercise}
\newcommand{\ben}{\begin{enumerate}}
\newcommand{\een}{\end{enumerate}}
\newcommand{\bit}{\begin{itemize}}
\newcommand{\eit}{\end{itemize}}
\newcommand{\edoc}{\end{document}}
\newcommand{\Aff}{\hbox{Aff}}
\newcommand{\Si}{\mathbf{S}_{\hbox{{\tiny IFR}}}}
\newcommand{\So}{\mathbf{O}}%{\mathbf{S}^0}
\newcommand{\Sm}{\mathbf{S}}
\title[Foundations of Finsler spacetimes]{FOUNDATIONS OF FINSLER SPACETIMES
FROM THE OBSERVERS' VIEWPOINT}
\author[A. Bernal]{Antonio N. Bernal} \address{Departamento de Geometr\'{\i}a y Topolog\'{\i}a, Facultad de Ciencias, \hfill\break\indent Universidad de Granada,\hfill\break\indent Campus Fuentenueva s/n, \hfill\break\indent 18071 Granada, Spain}\email{anbernal@no-gravity.eu}
\author[M. A. Javaloyes]{Miguel Angel Javaloyes}\address{Departamento de Matem\'aticas, \hfill\break\indent Universidad de Murcia, \hfill\break\indent Campus de Espinardo,\hfill\break\indent 30100 Espinardo, Murcia, Spain} \email{majava@um.es}
\author[M. S\'anchez]{Miguel S\'anchez} \address{Departamento de Geometr\'{\i}a y Topolog\'{\i}a, Facultad de Ciencias, \hfill\break\indent Universidad de Granada,\hfill\break\indent Campus Fuentenueva s/n, \hfill\break\indent 18071 Granada, Spain}\email{sanchezm@ugr.es}
\begin{document}
%	\tagsleft@false
\begin{abstract} Physical foundations for relativistic spacetimes are revisited, in order to check at what extent Finsler spacetimes lie in their framework. Arguments based on inertial observers (as in the foundations of  Special Relativity and  Classical Mechanics)  are shown to correspond with a double linear approximation in the measurement of space and time. While General Relativity 
%(as well as a generalization of Galilei-Newton spacetime) 
appears by dropping the first linearization,   Finsler spacetimes appear by dropping the second one. The classical Ehlers-Pirani-Schild approach is  carefully  discussed and shown to be compatible with the Lorentz-Finsler case.
 The precise 
 mathematical definition of Finsler spacetime is discussed by using the {\em space of observers}. Special care is taken in some issues such as: %(a)~
the fact that 
a Lorentz-Finsler metric would be physically %plausible
 measurable only on the causal directions for a cone structure,  
%Taking into account Lamm\"erzhal and Perlick \cite{LP}, 
%(b)~
the implications for models of spacetimes of  some apparently innocuous hypotheses on differentiability, or %(c)~
the possibilities of measurement of a varying speed of light. 

\vspace{5mm}

\noindent {\em MSC:} 53C60, 83D05
83A05, 83C05. \\
{\em Keywords:} Finsler spacetime, Ehlers-Pirani-Schild approach, Lorentz symmetry breaking, Very Special Relativity, Signature-changing spacetimes.
\end{abstract}
\maketitle

%\newpage

%{\small
\tableofcontents
%}

\section{Introduction}
A plethora of alternatives to classical General Relativity has been developed since its very beginning. Many of them were motivated by the search of a unified theory which solved disturbing issues of compatibility with Quantum Mechanics %permitting the quantization of gravity
 (Kaluza-Klein, M-theory, quantum field gravity...) while, since the 90's,  unexpected cosmological measurements led to further alternatives (cosmological constant, quintaessence,  theories with varying speed of light...). However, the possibility to consider a Finslerian modification of GR  has not settled   in the mainstream of research and it has been scarcely considered in the literature until recent times (some examples are references \cite{AaJa16,CaStan16,CaStan18,FP16,
 FPP18,GGP07,HPV,Ishi81,JavSan14,
 JavSan18,LP,Min16,PerlickFermat,
 Pfeifer,Tavakol}). Certainly, the generality of Finsler Geometry in comparison with  the Riemannian setup  (namely, analogous to the generality of the convex open subsets of an affine space in comparison with the ellipsoids) is a big drawback, as the number of new variables and parameters would  seem 	immeasurable.  Neverthelesss, this is similar to the generality of General Relativity in comparison with Special one (see Remark \ref{r_6.1}).  Anyway,  any Finslerian modification of General Relativity would mean to drop  the beloved Lorentz invariance not only at  global and local levels  (as it occurs  in  General Relativity)  but also infinitesimally, i.e. looking such an invariance as a limit symmetry around each event. However, from a fundamental viewpoint, this should  not seem too strange: as physical measurements are always approximations, one would not be surprised if the  symmetries of the models were only approximations to a more complex reality. Indeed, as we will explain, the existence  of some symmetries among observers becomes a   natural requirement   in order to make direct measurements of space and time. There is no  reason to assume that the physical reality will satisfy such requirements in an exact way ---even though, certainly, the existence of such approximated symmetries are meaningful and useful for modeling. 
 
In the present article, a physical motivation to consider Finsler spacetimes as models of  space and time is developed, and quite a few of related ideas are discussed. We stress the following four guidelines.

\subsubsection*{1. Approach from the foundations viewpoint} 
%\bit\item[(A)] 
We develop an approach for the foundations of the theories of spacetime 
starting at the observers viewpoint in  Classical Mechanics and Special Relativity (\S \ref{s2}--\ref{s4}).
 Finsler spacetimes are shown to appear by dropping the symmetries of inertial observers in a natural way.
 Our approach 
follows the viewpoint in  \cite{BLS} by  L\'opez and two of the authors in \S \ref{s2} and \S \ref{s3}, 
 which includes  the celebrated ideas by V. Ignatowski  \cite{Ignato, Ignato1, Ignato2} about  the foundations of Special Relativity.
 
Specifically, we argue that the geometric models of spacetime appear from the notion of inertial observers by means of a double linearization of the measuring problem, namely: \bit\item[(1)] there are inertial frames of reference (IFR) whose changes of coordinates  are linear, and 
 \item[(2)] the symmetries in the change of the timelike coordinate (and, independently, in the three spacelike ones) between two IFR's are encoded in that linear structure. \eit 
 These assumptions lead to four classic linear 4-dimensional structures, name\-ly Lorentz-Minkowski, Galilei-Newton, dual Galilei-Newton and Euclidean (see \S \ref{s2}).  
 Dropping (1)  leads from Special to General Relativity (see~\S \ref{s3}), as well as to other transitions  for the other three structures. The latter are  also briefly explained here, namely, from Galilean to Leibnizian spacetimes, see \S \ref{s3.4}, including signature changing metrics (see \S \ref{s3.1}),  and  the possibility of a pointwise varying speed of light  (see \S \ref{s3varying_c}).  Dropping  (2) leads to Lorentz-Minkowski norms and, then, to Finsler spacetimes, discussed both mathematically and physically in \S  \ref{s4}.

%\item[(B)] 
\subsubsection*{2. Critical revision of EPS axiomatics}
The classical Ehlers, Pirani and Schild (EPS) approach  for General Relativity \cite{EPS} will be revisited (see \S \ref{s5}). We show that, certainly, this approach {\em is compatible} with the existence of Lorentz-Finsler metrics,  a  possibility  already suggested  by  Tavakol and Van Den Bergh for Berwald spaces \cite{Tavakol}  (see the discussion in \S \ref{s_5.2.5}).  Such   a possibility was ignored  in EPS  because of  a too restrictive development of two steps\footnote{The reason relies on a classical result for  any Finsler metric $F$: its square $F^2$  is $C^2$ at the zero section if and only if $F$ comes from a Riemannian metric (see Remark \ref{r_finslerm} (1) and \S \ref{5.2}).}, %  \cite[Proposition 4.1]{Warner65}.}, 
namely:

\ben\item   An artificial  requirement of smoothability  of some combination of radar coordinates,  which  would forbid null cone structures non-compatible with Lorentzian metrics.  This was recently pointed out by  Lamm\"erzhal and Perlick's \cite{LP} and it is developed here in detail   (see \S \ref{5.2.1}). 
\item  A deduction of the existence of a projective structure  starting at a general version of the law of inertia. %This would exclude the timelike pregeodesics for a Lorentz-Finsler metric but the proof becomes mathematically flawed 
 This would exclude the timelike pregeodesics for a Lorentz-Finsler
metric (except those of Berwald-type)
but, again, the proof crucially relies on an argument of $C^2$-differentiability, which is related to
non-trivial issues on Finslerian metrics 
 (see \S \ref{5.2.2}). 
\een

%\item[(C)] 
\subsubsection*{3. Precise geometric framework}
Along the article, a careful mathematical approach is carried out, following \cite{JavSan18}. For the convenience of the reader,   a brief mathematical summary on (Lorentz) Finsler  concepts is also included,  \S \ref{s4.1}.  This allows us to model and to discuss issues on Lorentz-Finsler metrics $L$ which turn out to be important from the physical viewpoint such as:

\ben\item {\em Causal cone domain} (\S \ref{s4}). The  physically meaningful domain for $L$ is only  the causal cone of a cone structure $\C$. 

 Indeed, even in the classical relativistic case  only the future-directed causal directions for a cone $\C^+$ determined by the metric $g$ contains the elements physically measurable for any (true or {\em gedanken}) experiment. 
 In Relativity,   the  Lorentzian scalar product $g_p$ at each event $p$ is    determined by its value on the   cone $\C^+_p$ (or on its timelike directions); therefore, a Lorentz metric $g$  can be determined on the whole $TM$ even if, actually, only its value on $\C^+$ can be measured.  
However, this is not by any means true for a Lorentz-Finsler metric $L$, where there is a huge freedom to extend the Lorentz-Finsler metric away from $\C$. 

So, our Lorentz-Finsler metrics will be defined only on a (causal) cone 
structure\footnote{This is consistent with our choices in our previous work \cite{JavSan14}. There are other reasons for this choice from the purely  mathematical  viewpoint, as it clarifies the properties of anisotropically conformal metrics, see \cite{JavSan18}.}.  

\item {\em Smoothness}, i.e., differentiability up to some appropriate order. Usually, such a requirement is regarded as a harmless macroscopic approximation to the structure of the spacetime. However, the discussion on EPS above shows that this is not so trivial in the Finslerian case.  What is more,  other issues  appear in the literature:
\bit\item The possibility that  the cone is smooth and the Lorentz-Finsler metric is smooth only on the timelike directions  but cannot be smoothly extended to the cone, which happens in metrics such as Bogoslovski in Very Special Relativity \cite{Bogoslovsky} and others \cite{PW11},  see  \S \ref{s_6.1_modifiedSpecialRelativity}. 
\item The lack of differentiability  outside the zero section of Finsler product spacetimes, which may lead to definitions of Finsler     static spacetimes which are not smooth in the static direction \cite{CaStan16}, a fact which can be overcome with our approach  to the space of observers,  see \S \ref{s_phys_intuitions} (item \ref{item5} (b)).

\eit
\item {\em Anisotropic speed of light.} Finsler spacetimes permit different possibilities for a speed of light which may vary  not only with the point (an issue already considered even for relativistic spacetimes, \S \ref{s3varying_c})  but also with the direction, \S \ref{s6.3}. 
\een
%\eit

\subsubsection*{4. Importance of the space of observers}  The relevance of the {\em space of observers} in Special and General Relativity, its links with the 
symmetries of the spacetime and  the possibility to lift Relativity to this space have been stressed by several authors \cite{GW,Hohmann} in the framework of Lorentz violation and Lorentz-Finsler geometry. 
 It is worth emphasizing that the  essential role of this space appears explicitly  along our development.  In the initial discussion of the linearized models, we start with the set $\Sm$ of  {\em inertial frames of reference} (IFR), which permits even signature changing metrics (\S \ref{s3.1}). However, once the symmetries of these models are dropped, only the space of observers $\So$ remains as physically meaningful (Definition~\ref{d_So}). 
In a classical relativistic spacetime $(M,g)$, $\So$ is just the submanifold $\Sigma^g\subset TM$ of all the $g$-unit vectors in the future timelike  cone; thus,  each $\Sigma^g_p:=\Sigma^g\cap T_pM$ is a hyperbolic space in the tangent space $T_pM$ of each event $p\in M$. Breaking  Lorentz symmetry at each $p$ leads to regard $\Sigma^g$ just as a  more general pointwise  concave hypersurface $\Sigma$, which becomes then the indicatrix of a Lorentz-Finsler metric $L$  (see Remark \ref{rem_items}). 

This observers' viewpoint allows one to use geometric methods recently developed in \cite{JavSan18} which may have interesting physical applications such as: (a) going from $g$ to $L$ by perturbing   the pointwise hyperboloids $\Sigma^g$ into pointwise  concave hypersurfaces  $\Sigma$ (as suggested in \S \ref{s6.4}, such a perturbation might be produced by the presence of matter/energy and lead to quantum consequences), (b) to avoid or to smoothen possible singularities in $\Sigma$ and then in $L$ (showing that known non-smooth physical examples can always be approximated by smooth ones), (c) to construct systematically any Lorentz-Finsler metric from a Riemannian and a Finslerian one or (d) to construct systematically  static and stationary metrics (avoiding any problem of smoothability). 

 In our opinion, the previous ideas support strongly that  Finsler spacetimes have become an exciting vast field to explore thoughtfully from both the physical and mathematical viewpoints.

\section{The doubly linearized models}\label{s2}

 Next, we develop our approach for the foundations of the theories of spacetime.
As a difference with the EPS approach, we will not assume postulates on {\em the nature of the behavior of the physical objects} which will be measured but on {\em how we can measure} those physical objects. 
A posteriori, if we are able to measure by using some sort of symmetry, the spacetime itself will be endowed with the geometric structure which codifies such symmetries. 

 The first step, to be developed along this section, considers the simplest symmetries for observers, common to both Classical Mechanics and Special Relativity.  They will be regarded later as a (linear) idealization. 

\subsection{Postulates}  Let us introduce the approach to the 
theories of spacetimes following\footnote{It is worth pointing out that \cite{BLS} focuses on the viewpoint of General Relativity. So, the first postulate there is different to the one here.   Our  viewpoint was pointed in the  reference \cite{BS_gaceta} (written for a general audience  in Spanish) and it is developed further here by introducing concepts such as apparent temporality (Theorem \ref{t_k}) or arguments as those on the varying of the speed of light. } 
% However, this postulate is used in the reference \cite{BS_gaceta}. \sout{ by the same authors.}} \cite{BLS}. %Next, let us introduce the approach to the non-quantum theories of  spacetimes  following \cite{BLS} which includes the seminal  \cite{Ignato} (the \sout{dissemination} \br popular science ? \er paper in Spanish \cite{BS_gaceta} is also followed in this revision). 
\cite{BLS} (a priori, this is non-quantum, even though quantum links  will appear in \S \ref{s6.4}). 

The physical considerations on the existence of inertial  
frames of reference are encoded in the following two postulates. 

\begin{postulate}[Linear approach to  spacetime] \label{p1} The physical spacetime is endowed with a structure of affine space $\Aff$ on a  real  vector space $V$ of dimension $n=4$. 
Physical observers are able to construct a non-empty set $\Si$ of affine frames of reference (each one $R=(O,B)$ composed by a point $O\in \Aff$ and a  basis  $B$ of $V$) which are called {\em inertial frames of reference} (IFR). 

Thus, each IFR, $R$, provides an affine chart, i.e. a bijection  $\varphi: \Aff\rightarrow \R^4$, $\varphi(P)=(t(P),x^1(P),x^2(P),x^3(P))$, such that, given another IFR, $\bar R$, the coordinate change  $\bar\varphi\circ \varphi^{-1}: \R^4\rightarrow \R^4$ is an affine map. The first coordinate $t$ of each IFR will be called {\em temporal} and the other three $x^i$, {\em spatial}. 
\end{postulate}
The meaning of this first postulate is just 
that a linear approximation  Aff  to spacetime  is being considered. The postulate also says that physicists will be able to construct {\em some} of the natural charts of the affine space $\Aff$. The physical process to obtain such charts  is not specified, even though the names {\em temporal} and {\em spatial} suggest the nature of their measurements. 

Our second postulate, based essentially in von Ignatowski's \cite{Ignato}, will ensure just that, when making measurements of the temporal coordinate (resp. when making measurements of the spatial coordinates), the viewpoint of two IFR's are interchangeable. 
This will be reflected by a requirement of  symmetry in the corresponding charts. To understand this symmetry easily, let us discuss the bidimensional case $n=2$.  
Let $R, \bar R$ be two IFR's with coordinates $(t,x)$ and $(\bar t,\bar x)$, resp. By Postulate~1:
\begin{equation}
\label{ep1}
\begin{pmatrix}
\bar t\\
\bar x
\end{pmatrix}
=\begin{pmatrix}
a & b\\
c & d 
\end{pmatrix}
\begin{pmatrix}
t\\
x
\end{pmatrix}
+\begin{pmatrix}
e\\
f
\end{pmatrix}.
\end{equation}
The interchangeability of the viewpoints of $R$ and $\bar R$ will collect the following physical assertion:  the temporal coordinate $\bar t$  (resp. the spatial coordinate $\bar x$) of $\bar R$  measured by using the physical clock (resp. the rod) of $R$  goes by as  the temporal coordinate $t$  (resp. the spatial coordinate $x$) of $R$  measured by using the physical clock (resp. the rod) of $\bar R$. Mathematically,
\begin{align}
\label{ep2}
\partial\bar t/\partial t\,(=a)&=\partial t/\partial\bar t
&&\mbox{and}
&\partial\bar x/\partial x\,(=d)&=\partial x/\partial\bar x .
\end{align}
In dimension $n=4$, interchangeability between the three spatial coordinates will also be imposed. 

\begin{postulate}[Time and spatial interchangeability] \label{p2} Let $R, \bar R \in \Si$  be  two IFR's. Then, their coordinates $(t, x_1, x_2, x_3)$ and $(\bar t, \bar x_1, \bar x_2, \bar x_3)$ satisfy:
\begin{align}\label{e_post2}
\partial\bar t/\partial t&=   \partial t/\partial\bar t,
&&
&\partial \bar x_i/\partial x_j&=\partial x_j/\partial\bar x_i, \qquad \qquad \forall i,j=1,2,3.
\end{align}
\end{postulate}

\subsection{Groups $O^{(k)}(4,\mathbb{R})$} The linear part of an affine change of coordinates from a first IFR, $R$, to a second one, $\bar R$,  will be called the {\em transition matrix} $A$ from $R$ to $\bar R$.  The second postulate implies that the transition matrices satisfy  the condition \eqref{e_exercise1} below,  so,   in order to  obtain all the possibilities, one just  needs to solve the following algebraic exercise.

\begin{exer}\label{ex_1} Let $A\in GL(4,\R)$ be a regular $4\times 4$ matrix and $A^{-1}$  its inverse. Write them by using boxes as follows:
\begin{align*}
A= &
\left(
\begin{array}{c|c}
a_{00} & \ah\\
\hline
\av^t & \hat A
\end{array} \right), 
&
A^{-1}&=\left(
\begin{array}{c|c}
\tilde a_{00} & \tah\\
\hline
\tav^t & \tilde A
\end{array} \right), 
\end{align*}
where $a_ {00},\tilde a_{00}\in\R$, 
$\ah,\av, 
\tah, 
\tav\in \R^3$, 
$\hat A$, 
$\tilde A$  are $3\times 3$ submatrices, and the superscript $^t$ denotes transponse. Then, determine those matrices $A$ satisfying:
\begin{equation}\label{e_exercise1}
\tilde a_{00}= a_{00}\qquad \qquad \tilde A= \hat A^t.
\end{equation}
\end{exer}
Such an exercise is solved in \cite[\S 3]{BLS} in full detail. Next, we will describe the main properties of its solutions\footnote{The reader can consider the simple case $n=2$ (as  in \eqref{ep2}), when $\hat A \equiv d$, $ \tilde A  \equiv \tilde d$ ($d,\tilde d\in \R$). % and $d=\tilde d$ is imposed. 
	The  solutions of this case yield all the relevant possibilities. They follow easily by noticing that, from the algorithm to compute the inverse matrix: \begin{align*}
a&=\tilde a=d/\det A, & d&=\tilde d=a/\det A, & \tilde b&=-b/\det A & &\mbox{and} & \tilde c&=-c/\det A.  
\end{align*} In particular, $d\not=0$ $\Leftrightarrow$ $a\not=0$  and, then,  $\det A^2=1$. Therefore, this equality  would follow by  assuming additionally $a>0$  (i.e., $\partial \tilde t/\partial t>0$ in \eqref{e_post2}),  which will correspond with the condition of apparent temporality in Theorem \ref{t_k}.}. 

\begin{defi} Let  $S^1=\R\cup \{ \omega\}$ be the  circle regarded as the extended real line 
 $\mathbb{R}^*= [-\infty,+\infty]$   with $+\infty$ identified to $-\infty$ as a single point  $\omega$. For each $k\in S^1$, consider the matrix
 $$
I^{(k)}= \left(
\begin{array}{c|c}
k & 0\\
\hline
0 &  I_3
\end{array} \right) \qquad \hbox{(where} \; I_3 \; \hbox{is the} \; 3\times 3 \; \hbox{identity matrix)}
 $$ 
  and define  the group $O^{(k)}(4,\mathbb{R})\subset GL(4,\R)$ as follows:
\bit\item if $k\in \R$,  
$O^{(k)}(4,\mathbb{R}) = \{ A\in GL(4,\mathbb{R}):
\hbox{det}A^2=1, \; A^t I^{(k)}A=I^{(k)}\}$,
\item if $k=\omega$,  
$O^{(\omega)}(4,\mathbb{R}) = \{ A\in GL(4,\mathbb{R}): A^t\in O^{(0)}(4,\R)\}$.
\eit
We will say that $A\in GL(4,\R)$ is  {\em $k$-congruent} if $A\in O^{(k)}(4,\R)$.  Accordingly, two IFR's $R, R'$ are {\em $k$-congruent} is so 
is its transition matrix.  It is easy to check that any $k$-congruent matrix $A$ is  a solution of Exercise \ref{ex_1}  as in this case $A^t I^{(k)}= I^{(k)} A^{-1}$.  Remarkably, it will turn out that  the converse holds except in very exceptional cases (detailed in \cite[Prop. 3.1]{BLS}).  Indeed, these exceptional cases will be avoided by using very mild and natural conditions from both  the mathematical and physical viewpoints (any of the hypotheses (1)--(4) in the main Theorem \ref{t_k} below). 
\end{defi}
\begin{rem}\label{rem_2.2} (1) In the case $k\neq 0,\omega$, the equality 
$$
A^t I^{(k)}A=I^{(k)}
$$
implies $\hbox{det}A^2=1$ trivially. What is more, this equality is equivalent to
$$A^{-1} I^{(1/k)}(A^{-1})^t=I^{(1/k)}. $$
Then, the case $k=\omega$ becomes equivalent to taking the limit $k\rightarrow \omega  (\equiv \pm \infty)$: % can be also defined as:
$$O^{(\omega)}(4,\mathbb{R}) = \{ A\in GL(4,\mathbb{R}):
\hbox{det}A^2=1, \; A^{-1} I^{(0)}(A^{-1})^t=I^{(0)}\}.$$
%which justifies the intuitive idea $\omega = \infty (\equiv \pm \infty)$.

(2) If $A$ is $k$-congruent for two distinct values of $k$, then so it is for any $k$. Concretely, let $k_1,k_2\in S^1$, from  \cite[Lemma 3.3]{BLS} (see its part 1 and  proof): 
$$k_1\neq k_2 \Longrightarrow O^{(k_1)}(4,\R)\cap O^{(k_2)}(4,\R) = \cap_{k\in S^1} O^{(k)}(4,\R)= 
\{\pm 1\} \times O(3,\R),$$
where $O(3,\R)$ is the usual orthogonal group and $$\{\pm 1\} \times O(3,\R):= \left(
\begin{array}{c|c}
\pm 1 & 0\\
\hline
0 &  O(3,\R)
\end{array} \right).$$
\end{rem}
Now, the relevant solutions   %Exercise \ref{e_exercise1}1} 
to  our exercise can be easily  described. 
\begin{lemma}\label{l_2.3}
Let $A\in GL(4,\R$) satisfy the hypothesis \eqref{e_exercise1} of Exercise \ref{ex_1}. 
\ben
\item If the matrix $A^2$ also satisfies the property \eqref{e_exercise1} then $\det A^2=1$.
\item If $\det A^2=1$, then there exists  $k\in S^1$ such that $A$ is $k$-congruent, and  either $k$ is unique or it can be   arbitrarily chosen in $S^1$.
\item Let $A_1, A_2 \in GL(4,\R)$ be
 $k_1$- and $k_2$-congruent,  resp. If $k_1$ is univocally determined and  $A_1 \cdot A_2$ (resp.  $A_2 \cdot A_1$) is $k$-congruent for some $k \in S^1$, then   $A_1 \cdot A_2$   (resp. $A_2 \cdot A_1$) is $k_1$-congruent.%    If $k_i$ is univocally determined (for $i$ equal either to 1 or 2) and $A_1 \cdot A_2$ is $k$-congruent for some $k\in S^1$, then $A_1 \cdot A_2$  is $k_i$-congruent\footnote{Notice that this $k_i$ for $A_1 \cdot A_2$ might be non-univocally determined; for example, this happens when $A_2=A_1^{-1}$.}.
\een
\end{lemma}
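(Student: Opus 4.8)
The uniform tool will be block-matrix computation in the notation of Exercise~\ref{ex_1}, where \eqref{e_exercise1} reads $\tilde a_{00}=a_{00}$, $\tilde A=\hat A^t$.

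\emph{Part (1).} I would expand $A^2$ and $(A^{-1})^2=(A^2)^{-1}$ in the same blocks; then \eqref{e_exercise1} applied to $A^2$ collapses, in its lower-right $3\times3$ block, to the rank-one identity $\tav^t\tah=\ah^t\av$ (the $(0,0)$ scalar part being its trace). In the main regime $a_{00}\neq0$ and $\ah,\av\neq0$ this forces $\tav^t$ to be a scalar multiple $s$ of $\ah^t$ and $\tah$ the multiple $s^{-1}$ of $\av$; plugging this into the off-diagonal relations of $AA^{-1}=A^{-1}A=I_4$ gives first $\det\hat A=\pm a_{00}$, and then, via the matrix-determinant lemma,
\[
\det A=a_{00}\,\det\big(\hat A-a_{00}^{-1}\av^t\ah\big)=a_{00}\cdot\big(\pm a_{00}^{-1}\big)=\pm1,
\]
so $\det A^2=1$. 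The remaining regimes are easier: if $\ah=0$ or $\av=0$ then \eqref{e_exercise1} alone already forces $a_{00}=\pm1$ and $\hat A\in O(3,\R)$; and if $a_{00}=0$ then $AA^{-1}=A^{-1}A=I_4$ forces $\hat A$ to have rank two, with $\ker\hat A=\mathrm{span}(\ah^t)$, $\ker\hat A^t=\mathrm{span}(\av^t)$ and $|\ah|^2|\av|^2=1$, whence $\det A=-\ah\,\mathrm{adj}(\hat A)\,\av^t=\pm1$.

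\emph{Part (2).} For the existence of $k$ I would invoke \cite[Prop.~3.1]{BLS}: among the solutions of Exercise~\ref{ex_1}, those that are $k$-congruent for no $k\in S^1$ are exactly the ones with $\det A^2\neq1$ (for $n=2$, the matrices with vanishing $(0,0)$ entry, cf.\ the footnote before the Definition of $O^{(k)}(4,\R)$), so $\det A^2=1$ excludes them. (In the generic case $\ah\neq0$ one can also display $k$ directly: the block relations from \eqref{e_exercise1} and $\det A=\pm1$ force $A^tI^{(k)}A=I^{(k)}$ for $k=(1-a_{00}^2)/|\ah|^2$, and $\ah=0$ gives $k=\omega$.) The alternative ``$k$ is unique, or $A$ is $k$-congruent for every $k\in S^1$'' is then precisely Remark~\ref{rem_2.2}(2): being $k_1$- and $k_2$-congruent with $k_1\neq k_2$ forces $A\in\bigcap_{k\in S^1}O^{(k)}(4,\R)$.

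\emph{Part (3).} If $k_1=k_2$ there is nothing to prove, as $O^{(k_1)}(4,\R)$ is a group; so I assume $k_1\neq k_2$, fix any $k$ with $A_1A_2\in O^{(k)}(4,\R)$, and aim to prove $k=k_1$. Let $e_0$ be the first canonical basis vector of $\R^4$ and set $r_1:=A_1^te_0$, $r_2:=A_2^te_0$, $s:=(A_1A_2)^te_0=A_2^tr_1$ (the first rows of $A_1$, $A_2$, $A_1A_2$, as columns). Assuming first that none of $k_1,k_2,k$ is $\omega$, substituting $A_j^tI^{(k_j)}A_j=I^{(k_j)}$ and $I^{(k)}=I^{(k_j)}+(k-k_j)e_0e_0^t$ into $(A_1A_2)^tI^{(k)}(A_1A_2)=I^{(k)}$ yields the identity
\[
(k_1-k_2)\,r_2r_2^t+(k-k_1)\,ss^t=(k-k_2)\,e_0e_0^t .
\]
Suppose now $k\neq k_1$. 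If $k=k_2$ the identity reads $r_2r_2^t=ss^t$, so $s=\pm r_2=\pm A_2^te_0$, hence $r_1=\pm e_0$, i.e.\ $A_1$ has first row $\pm e_0^t$; then $A_1^tI^{(k_1)}A_1=I^{(k_1)}$ forces $A_1\in\bigcap_{k\in S^1}O^{(k)}(4,\R)$, contradicting that $k_1$ is univocally determined. If $k\neq k_2$, all three coefficients are nonzero; the left side has image in $\mathrm{span}\{r_2,s\}$ and the right side is $e_0e_0^t\neq0$, so if $r_2,s$ were independent, testing the identity against $x$ with $r_2^tx=1$, $s^tx=0$ would give $r_2,s\in\mathrm{span}\{e_0\}$, absurd; thus $r_2,s$ are both proportional to $e_0$, $A_2$ has first row proportional to $e_0^t$, and $A_2^tI^{(k_2)}A_2=I^{(k_2)}$ forces $A_2\in\bigcap_{k\in S^1}O^{(k)}(4,\R)$; then $A_1A_2\in O^{(k_1)}(4,\R)$, and being also in $O^{(k)}(4,\R)$ with $k\neq k_1$ it lies in $\bigcap_{k\in S^1}O^{(k)}(4,\R)$, so $A_1=(A_1A_2)A_2^{-1}\in\bigcap_{k\in S^1}O^{(k)}(4,\R)$ --- again the contradiction. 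Hence $k=k_1$. The statement for $A_2\cdot A_1$ is entirely analogous (and also follows from the above by transposing, using $A\in O^{(\ell)}(4,\R)\Leftrightarrow A^t\in O^{(1/\ell)}(4,\R)$, a restatement of Remark~\ref{rem_2.2}(1), with $1/0:=\omega$, $1/\omega:=0$).

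\emph{Main obstacle.} The computations above are routine; the real attention will be needed when $\omega$ occurs among $k,k_1,k_2$, so that neither $I^{(\omega)}$ nor the expansion $I^{(k)}=I^{(k_j)}+(k-k_j)e_0e_0^t$ is available. There one argues instead with the explicit descriptions of $O^{(0)}(4,\R)$ (block-upper-triangular matrices with $\pm1$ as $(0,0)$-entry and an $O(3,\R)$ matrix as lower-right block) and of $O^{(\omega)}(4,\R)$ (their transposes), together with the transpose duality of Remark~\ref{rem_2.2}(1); the logical skeleton --- read off a first-row/column equation, conclude that one factor lies in $\bigcap_{k\in S^1}O^{(k)}(4,\R)$, and contradict the univocal determination of $k_1$ --- is unchanged, but the cases must be dispatched one by one. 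The corresponding soft spots elsewhere are $a_{00}=0$ in Part (1) (which needs the rank-two structure of $\hat A$) and, in Part (2), the check that the explicit $k$ is consistent with the remaining block relations; both are straightforward once the degenerate normalizations, such as $|\ah|^2|\av|^2=1$, are recorded.
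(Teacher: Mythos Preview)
Your proposal takes a genuinely different route from the paper's. The paper's proof of all three parts is essentially by citation to \cite{BLS}: Part~(1) is ``the sentence above \cite[Lemma~3.3]{BLS}'', Part~(2) is ``the paragraph above \cite[Lemma~3.3]{BLS}'' together with Remark~\ref{rem_2.2}(2), and Part~(3) invokes \cite[Lemma~3.3(1)]{BLS} by viewing $\{R_0,R_1,R_2\}$ as a set of three observers whose pairwise transition matrices $A_1$, $A_2$, $A_1A_2$ (and inverses) must then share a common $k'$, forced to equal $k_1$ by its uniqueness. By contrast, you unpack the block-matrix algebra directly: your Part~(1) actually reproves the relevant piece of \cite[Prop.~3.1]{BLS}, and your Part~(3) replaces the appeal to \cite[Lemma~3.3(1)]{BLS} with the rank-one identity $(k_1-k_2)r_2r_2^t+(k-k_1)ss^t=(k-k_2)e_0e_0^t$, which is a clean and informative substitute. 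What your approach buys is self-containment and an explicit mechanism; what the paper's buys is brevity, since the heavy lifting was already done in \cite{BLS}.

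Your computations are correct in the generic regimes, and your ``main obstacle'' paragraph accurately locates where care is needed (the $\omega$ cases in Part~(3), $a_{00}=0$ in Part~(1)). One minor slip: the parenthetical formula $k=(1-a_{00}^2)/|\ah|^2$ in Part~(2) is not the direct consequence of the $(0,0)$ block of $A^tI^{(k)}A=I^{(k)}$, which gives $k=|\av|^2/(1-a_{00}^2)$; the two coincide only once one has also established $|\ah|^2|\av|^2=(1-a_{00}^2)^2$, which does follow from \eqref{e_exercise1} and $\det A^2=1$ but via the relations $\ah\,\tav^t=\tah\,\av^t=1-a_{00}^2$ together with the rank-one identifications you derived in Part~(1). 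Since this formula is only an aside and your main argument for Part~(2) is the same citation the paper uses, this does not affect the validity of the proposal.
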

\begin{proof} Assertion (1) follows from the sentence above \cite[Lemma 3.3]{BLS} (recall that, as explained at the beginning of the paragraph containing that sentence, {\em incongruent} means det$A^2  \neq 1$). % Assertion (1) follows from \cite[Proposition 3.1]{BLS} (see also the discussion in its Section 5, item~1). 
	For (2), the existence of $k$  follows also from the paragraph above \cite[Lemma 3.3]{BLS} and the uniqueness from part 1 of \cite[Lemma 3.3]{BLS} regarding $S_p$ as a set of two congruent observers and $A$ as the transition matrix between them %(notice also that the existence of $k$ came from [7, Lemma 3.1], by taking into account its item 3 and the definition of $k$-congruence in [7, p. 92] and its uniqueness was characterized above in 
	or from Remark 2.2(2). Assertion (3) follows from part 1 of \cite[Lemma 3.3]{BLS} regarding $S_p$ as a set of three congruent observers with transition matrices $A_1$, $A_2$ and, say, $A_1\cdot A_2$ (and its inverses). Then, all of them must be $k'$-congruent for some $k'$ and, as $k_1$ was univocally determined, $k'=k_1$.
%Assertion (2) follows from  \cite[Lemma 3.1, item 1]{BLS} (recall also the discussion below its formula (10)) and from Remark \ref{rem_2.2}(2) above. For assertion (3), recall \cite[Lemma 3.3, part 1]{BLS}.
\end{proof}
Lemma \ref{l_2.3} implies that, under minimal realistic hypotheses,  any set  $\Si$  of IFR determines   (at least)  one  value of $k\in S^1$. Mathematically, such  realistic properties just ensure that  $\det A= \pm 1$, which would be related to the conservation of the volume.  Such a property might  also be postulated directly, nevertheless, there are other physically sound  weak hypotheses that  imply it. 

In order to formulate such hypotheses,  recall first that the set $\Si$ of IFR obtained from our postulates is rather arbitrary. For example,
 the unique restriction to its number of  elements comes from   
$\Si\neq \emptyset$; that is, one can  remove arbitrarily some  elements of $\Si$ (but not of all them) and this new set would satisfy  the postulates \ref{ep1} and \ref{ep2} too.  What is more, if there is some $k\in S^1$ such that $\Si$ is composed by (a small number of) $k$-congruent IFR's, one can enlarge $\Si$ by acting with the group $O^{(k)}(4,\R)$ obtaining a bigger set $S$ of compatible IFR's.  Notice that if there were a second $k'\neq k$ such that all IFR's in $\Si$ were $k'$-congruent, a different enlargement $S'$ could also be obtained.  These observations suggest the following construction. 
%In order to obtain a better mathematical structure, 
Given $\Si$, define \begin{equation}\label{e_sifrestrella}
\Si^*:= \cap_\alpha S_\alpha, 
\end{equation} where 
 each $S_{\alpha}$ is a set of affine reference frames satisfying: (i)~$S_{\alpha}$ includes  $\Si$, (ii)~the change of coordinates between any two elements of $S_{\alpha}$ satisfies the formula \eqref{e_post2} in Postulate \ref{ep2}, and (iii) $S_{\alpha}$  is {\em maximal} (i.e., not included in a bigger set satisfying the previous conditions (i) and (ii)). 
 
Recall: (a) $\Si^* (\supset \Si)$ is determined univocally by $\Si$, (b) physically, all the affine reference frames in  $\Si^*$ could be regarded as IFR's with the same    status   as those in $\Si$, and (c) mathematically, one would expect that the transition matrices between all the pairs of elements of $\Si^*$  had  a more natural  structure than $\Si$.

\begin{thm}\label{t_k}
Let $\Si$ be a set of IFR's (satisfying the Postulates~\ref{ep1} and~\ref{ep2}).
There exists $k\in S^1$ such that  the transition matrix $A\in GL(4,\R)$ of each transformation   of coordinates between two IFR's, $R_1$ and $R_2$, is  $k$-congruent for all $R_1, R_2 \in \Si$, whenever any of the following hypotheses hold:

\ben\item {\em Conservation of the IFR volume}: $\det A= \pm 1$, for any  transition matrix  $A$.
\item  {\em Transitivity:} if $A$ is the transition matrix from   a first IFR, $R_1\in \Si$, to a second IFR, $R_2  \in \Si$, then there exists an IFR, $R_0$, such that the transition matrix $A$ from $R_0$ to $R_1$ is equal to $A$. 
\item {\em Action by a group}:  the set of transition matrices $A$ between elements of  $\Si^*$ (as in \eqref{e_sifrestrella}) is a subgroup $G$ of $GL(4,\R)$.
\item {\em Apparent temporality}:  any transition matrix $A$ between elements of $\Si$ satisfies $a_{00}>0$ (with $a_{00}$  as in Exercise~\ref{ex_1};  recall also the discussion at  \S \ref{s2.4}). 
\een
Moreover, the existence of such a $k$ implies that the properties (1), (2) and (3) hold, being the group $G$ in (3) either $O^{(k)}(4,\R )$ or  the intersection of all of them,  i.e.,  $\{\pm 1\}\times O(3,\R)$. \end{thm}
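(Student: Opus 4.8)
The plan is to channel all four hypotheses (1)--(4) into the single assertion ``$\det A^2=1$ for every transition matrix $A$'', and then to extract a common $k\in S^1$ from Lemma~\ref{l_2.3} by exploiting that the transition matrices of $\Si$ form a (transitive) groupoid. Hypothesis (1) \emph{is} this assertion. Under (2), if $A$ is the transition from $R_1$ to $R_2$ and $R_0$ is the frame supplied by transitivity, then $A\cdot A=A^2$ is the transition from $R_0$ to $R_2$ and hence is itself a transition matrix between IFR's; so both $A$ and $A^2$ satisfy \eqref{e_exercise1}, and Lemma~\ref{l_2.3}(1) gives $\det A^2=1$. Under (3), since $\Si\subseteq\Si^*$, every transition between elements of $\Si$ lies in the group $G$; then $A^2\in G$ as well, every element of $G$ satisfies \eqref{e_exercise1} by construction of $\Si^*$, and Lemma~\ref{l_2.3}(1) again applies. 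Under (4), writing Cramer's rule for the $(0,0)$-entry of $A^{-1}$ and of $(A^{-1})^{-1}=A$ and using $\tilde A=\hat A^t$ from \eqref{e_exercise1}, I get the two identities $a_{00}=\det(\hat A)/\det A$ and $a_{00}=\det(\hat A)\cdot\det A$ simultaneously; since $a_{00}>0$ in particular forces $\det\hat A\neq0$, dividing them yields $(\det A)^2=1$.

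Next I would produce the common $k$. By Lemma~\ref{l_2.3}(2) each transition matrix is $k_A$-congruent for some $k_A\in S^1$, with $k_A$ either univocally determined or freely choosable; by Remark~\ref{rem_2.2}(2) the latter happens exactly when $A\in\{\pm1\}\times O(3,\R)$. If every transition lies in $\{\pm1\}\times O(3,\R)$, any $k$ works. Otherwise fix a transition $A_0$, say from $R_1$ to $R_2$, whose $k_0$ is univocally determined. For an arbitrary $R\in\Si$, let $A$ be the transition from $R_2$ to $R$; then $A\cdot A_0$ is the transition from $R_1$ to $R$, hence a transition matrix, hence $k'$-congruent for some $k'$ by the previous step and Lemma~\ref{l_2.3}(2), and then $k_0$-congruent by Lemma~\ref{l_2.3}(3). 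So every transition issuing from $R_1$ is $k_0$-congruent, and since $O^{(k_0)}(4,\R)$ is a group, the transition between any two IFR's $R,R'$ --- which equals (transition $R_1\to R'$) times (transition $R_1\to R$)$^{-1}$ --- is $k_0$-congruent as well. This gives $k:=k_0$.

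For the converse, assume all transitions are $k$-congruent. Then $\det A^2=1$ straight from the definition of $O^{(k)}(4,\R)$ (and Remark~\ref{rem_2.2}(1) when $k=\omega$), which is (1). For (2), given a transition $A\colon R_1\to R_2$, take $R_0$ to be the affine frame obtained by precomposing the chart of $R_1$ with $A^{-1}$; then for every $R\in\Si$ the transition from $R_0$ to $R$ equals (transition $R_1\to R$)$\cdot A$, a product of $k$-congruent matrices, so it is $k$-congruent, hence a solution of \eqref{e_exercise1}, so $\Si\cup\{R_0\}$ still obeys Postulate~\ref{p2} and $R_0$ qualifies as an IFR. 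For (3), fixing $R_*\in\Si$ one must check that each maximal Postulate~\ref{p2}-compatible enlargement of $\Si$ is the orbit $O^{(k')}(4,\R)\cdot R_*$ for some $k'$ making all $\Si$-transitions $k'$-congruent --- this is where one invokes the classification in \cite[\S3]{BLS}; intersecting over the enlargements then shows $\Si^*$ is the orbit of $R_*$ under $\bigcap_{k'}O^{(k')}(4,\R)$ and the transition group $G$ is accordingly either a single $O^{(k)}(4,\R)$ or $\{\pm1\}\times O(3,\R)$.

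The hard part will not be the formal bookkeeping but two points: the Cramer-type identity that makes \emph{apparent temporality} imply volume preservation, and --- more seriously --- the passage from the pointwise data ``each $A$ is $k_A$-congruent'' to one global $k$, which genuinely needs both the rigidity in Lemma~\ref{l_2.3}(3) and the fact that ``chained'' transition matrices compose to transition matrices. The maximality argument underlying the description of $G$ in (3) is the final delicate ingredient, and it is precisely there that the detailed classification of \cite{BLS} does work beyond the lemmas quoted above.
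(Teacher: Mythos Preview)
Your proposal is correct and follows essentially the same route as the paper: reduce each of (1)--(4) to $\det A^2=1$, invoke Lemma~\ref{l_2.3} together with the groupoid structure of transition matrices to extract a common $k$, and for the converse identify $\Si^*$ with an $O^{(k)}(4,\R)$-orbit (or the intersection of all of them). The only presentational differences are that you unpack explicitly the Cramer identity behind $(4)\Rightarrow(1)$ and the chaining argument for the common $k$ (which the paper delegates to \cite[Lemma~3.1 and Lemma~3.3]{BLS}), and that you pass from (3) to $\det A^2=1$ directly rather than through (2); you also correctly flag that the precise description of $G$ in the converse is where the full classification of \cite{BLS} is genuinely needed.
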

\begin{proof}
First, let us prove the existence of the required $k$ under the hypothesis {\em (1)} and, then, let us check {\em (1)} $\Leftarrow$ {\em (2)} $\Leftarrow$ {\em (3)} and {\em (1)} $\Leftarrow$ {\em (4)}. Under {\em (1)}, the existence of some $k$ for each $A$ is ensured by part (2)
of Lemma \ref{l_2.3}. Then \cite[Lemma 3.3(1)]{BLS} (or part (3) of Lemma \ref{l_2.3}) 
ensures that one can choose the same  $k$ for all the transition matrices $A$ determined by pairs of elements in $\Si$.
%\footnote{\br No entiendo por qué se pasa por el Lema 2.3 (3), pues de \cite[Lemma 3.3(1)]{BLS} se deduce directamente que se puede elegir el mismo $k$\er}. 
 If the hypothesis {\em (2)} holds then $A^2$ is also a transition matrix between IFR, and part (1) of Lemma~\ref{l_2.3} implies that the  hypothesis {\em (1)} holds too. Analogously, {\em (3)} implies {\em (2)} trivially. 
Finally,  {\em (4)}  implies {\em (1)} from \cite[Lemma 3.1, item 1(ii)]{BLS}.

For the last assertion, let us check that, when such a $k$ exists, then {\em (3)} holds. Indeed, one of the sets $S_\alpha$ in the definition of  
$\Si^*$, name it $S_k$, can be chosen such that the group $O^{(k)}(4,\R)$ acts  transitively on $S_k$   (just choose $R\in \Si$ and take all the affine reference frames  $R'$ with transition matrix $A$ in $ O^{(k)}(4,\R)$).  So, when $k$ is univocally determined for one pair of  elements $R_1, R_2 \in \Si$, then $\Si^*=S_k$ and the hypothesis {\em (3)} holds with the group $G= O^{(k)}(4,\R)$. Otherwise, $k$ can be  arbitrarily  chosen by Lemma~\ref{l_2.3}(2), then
$\Si^*= \cap_{k\in S^1} S_k$ and {\em (3)} holds with the group $G=\{\pm 1\}\times O(3,\R)$ (see Remark~\ref{rem_2.2}(2)).
\end{proof}

\subsection{Linear models of spacetimes} \label{s2.3}
Theorem \ref{t_k} implies that, whenever one of its mild hypotheses {\em (1)---(4)} holds, the existence of a  set  $\Si$ of IFR's according to Postulates \ref{p1} and \ref{p2}, selects a group $G=O^{(k)}(4,\R)$ (or the intersection of all of them). As  the spacetime was represented by an affine space $\Aff$ on a vector space $V$ by postulate \ref{p1},  this vector space (and then $\Aff$) will be endowed automatically with the geometric structure invariant by  $G$. 
Let us study each case. 

\begin{enumerate}
\item Case $k\in(-\infty,0)$. By the definition of $\ok$, $V$ is  naturally endowed with a Lorentzian scalar product $\langle\cdot , \cdot \rangle_1$. 
Indeed, if $R=(O,B=(e_0,e_1,e_2,e_3))$ is any IFR, then the unique $\langle\cdot , \cdot \rangle_1$ such that $B$ is an orthornormal 
basis for it, up to the normalization of its first vector 
% which satisfies 
 \begin{equation}\label{e_norm}
 \sqrt{|\langle e_0 , e_0 \rangle_1|}= \sqrt{-k},
 \end{equation}
becomes independent of the chosen $R$. 
What is more, for $k=-1$, the group $\ok$ is the Lorentz group; otherwise, $\ok$ is conjugate to the Lorentz group. Indeed, putting $k=-c^2$ with $c>0$, then $I^{(k)}=I^{(c)}\cdot I^{(-1)}\cdot I^{(c)}$, 
the inverse of $I^{(c)}$ is  $I^{(1/c)}$ and 
\begin{equation}\label{e_conj}
\ok = I^{(1/c)}\cdot O^{(1)}(4,\R)\cdot I^{(c)}.
\end{equation} 
 Anyway, the spacetime of Special Relativity is obtained.
 
\item Case $k=\omega$. The group $\ok$ becomes the (non-orthochronous) {\em Galilean group}
$$
O^{(\omega)}(4,\mathbb{R}):=\left\{
\left(\begin{array}{c|c}
\pm 1 & 0\\ \hline
\av^t & \hat A
\end{array}\right):
\av\in \mathbb{R}^3, \hat A\in O(3,\R)\right\}.
$$
Thus,  the dual basis $B^*=(\phi^0, \phi^1,\phi^2,\phi^3)$ of each IFR contains the same first element $\phi^0$, up to a sign. When a choice in $\{\phi^0,-\phi^0\}$ is carried out, name it   $ \ttt :V\rightarrow \R$, then 
  $\ttt $  is called the {\em absolute time}. The kernel $E$ of $\pm\phi^0$ is endowed with a scalar product $\langle\cdot,\cdot \rangle_E$ (being  the elements $ (e_1,e_2,e_3)$   %\footnote{He cambiado $\{$ por $($ } 
  of $B$ an orthonormal basis of $\langle\cdot,\cdot \rangle_E$ for each IFR). Then, $E$ endowed with this scalar product is called the {\em absolute 
space}.

Summing up, the spacetime of Galilei-Newton is recovered now.

\item Case $k=0$. The group $\ok$ becomes the  {\em dual Galilean group}\footnote{This group was studied by L\'evi-Leblond \cite{carrollLevy}, who named it {\em Carrollian group}, after Lewis Carroll. Even though introduced as an academical exercise, recent applications of this group can be found in \cite{CarrollFigueroa, CarrollianGibbons}.}
$$
O^{(0)}(4,\mathbb{R}):=\left\{
\left(\begin{array}{c|c}
\pm 1 & \ah \\ \hline
0 & \hat A
\end{array}\right):
\ah \in \mathbb{R}^3, \hat A\in O(3,\R)\right\}.
$$
In this case, the  basis $B=(e_0, e_1,e_2,e_3)$ of each IFR contains the same first element $e_0$, up to a sign. Choosing a sign, this vector defines the  {\em absolute rest observer}. Thus, the kernel  (annihilator) of $\pm e_0$ in the dual space $V^*$ (that is,  the subspace $E^*:=$ Span$\{\phi^1,\phi^2,\phi^3\}$ of $B^*$ for each IFR) is also independent of the IFR. It is  naturally endowed with a scalar product 
 $\langle\cdot,\cdot \rangle_{E^*}$ so that,   for each IFR, the set $(\phi^1,\phi^2,\phi^3)$ becomes an orthonormal basis.

Summing up, an  a priori  aphysical dual of  Galilei-Newton spacetime (with a completely analogous geometric structure) is obtained.

\item Case $k\in(0,\infty)$. For $k=1$, the group $\ok$ is the Euclidean orthonormal group\footnote{It is worth pointing out that, in this case, not only the independent symmetries between time and spatial coordinates in formula \eqref{e_post2} hold, but also the crossed symmetries $\partial t/\partial \tilde x^i=\partial \tilde x^i/\partial t$ and $\partial x^i/\partial  \tilde t=\partial  \tilde t/ \partial x^i$ appear now.};
 otherwise, $\ok$ is conjugate to this group. Indeed, reasoning as in the case $k<0$, $V$ is naturally  endowed  with an Euclidean scalar product $\langle\cdot , \cdot \rangle_0$ and any basis  $B$ of an IFR is  orthornormal 
for $\langle\cdot , \cdot \rangle_0$, up to the normalization of its first vector.
 
Summing up, one obtains the  a priori aphysical case when  the full spacetime is endowed with  a Euclidean scalar product, which is mathematically analogous to the Lorentzian one.

\item Case $k\in S^1$ non-unique. In this case, the group is $G=\{\pm 1\}\times O(3,\R)$ and, thus, the basis $B$ and its dual $B^*$ for any IFR satisfy all the properties in the previous cases. In particular, choosing a sign, one has an absolute time $T$, an absolute rest observer $e_0$ (with $T(e_0)=1$) and an absolute space $(E,\langle\cdot,\cdot \rangle_{E})$ whose dual space can be identified  with $(E^*,\langle\cdot,\cdot \rangle_{E^*})$ defined in the case $k=0$.

This case should be regarded as aphysical too\footnote{Anyway,  it would represent the   model of space and time which goes back to Aristoteles. Recall that in that model, one would assume not only the existence of the absolute space and time but also that, for any $P\in  \Aff$, there exists a physical observer at $P$ at absolute rest. This would determine the affine line   $P+ \R \cdot e_0$, which would be regarded as a ``space point at any time''.}  and, being obtained as a ``degenerate'' case of the previous ones, it will not be taken into account anymore.   
\end{enumerate}

\subsection{Temporal models and interpretation of $k=-c^2$}\label{s2.4}
Taking into account the previous four models of spacetime which depend on a unique $k\in S^1$, let us revisit the role of the  hypothesis  of  {\em apparent temporality}  in Theorem \ref{t_k}. 

%In particular, this property holds under {\em  temporality}, that is, when the property $a_{00}>0$ holds for the transition matrix $A$ between any pair of elements in $\Si^*$.

Recall that apparent temporality was enough to ensure the existence of $k$ in that theorem. However, the Euclidean case $k>0$ would not be excluded by this hypothesis, because the set $\Si$ of all the IFR's might contain  ``few'' elements (so that  only  transition matrices $A$ with $a_{00}=\cos\theta$ appeared for values of $\theta$ with $\cos \theta>0$). \ Moreover, in the other three cases for $k$, the elements of $\Si$ would determine a {\em 
time-orientation}\footnote{That is, a choice of one of the two timelike cones when $k<0$ and one of the two choices of absolute time or absolute rest observer when $k=\omega, 0$, resp.} under apparent temporality, but there would still be elements in $\Si^*$ which would not match with the chosen time-orientation.
However, when the case $k>0$ is disregarded a priori (say, regarding it as  aphysical), it would be natural to strengthen the hypothesis of apparent temporality into {\em temporality}, namely: all the  transition matrices between pairs of elements of $\Si^*$ in \eqref{e_sifrestrella}  have $a_{00}>0$.
%\footnote{Aqui nuevamente convendria aclarar quien es $\Si^*$} 
This requirement not only would exclude the group $\ok$ for $k>0$ but it would also imply a restriction on the group %$\ok$ 
for the other cases. This discussion  makes natural the following definition and convention. 
    
\begin{defi}\label{def_temp} The linear models of spacetime with $k\in 
 (-\infty,0) \cup \{\omega, 0\}$ will be called  {\em temporal models}. 
 When only these  models are considered, we will assume that {\em apparent temporality} also holds and, then,  the following {\em convention of temporality} can be assumed with no loss of generality: 

 (a) The  temporal models are  time-oriented. 

 (b) All the elements in  $\Si$ are assumed to lie in the chosen time-orientation.  

 (c) $\Si$ is assumed to be maximal  for the property (b). Thus, depending on the value of $k$, the {\em orthochronous} subgroup of the    Lorentz (or conjugate to Lorentz), Galilean or dual Galilean group will act freely and transitively on $\Si$. 

 (d) When there is no possibility of confusion, $\Si^*$ is regarded as equal to $\Si$ in (c). 
\end{defi}

%When these models are individually considered,  the hypothesis of temporality will be assumed to hold and, then, the corresponding group $\ok$ must be regarded as its orthochronous version.

For temporal models, given a transition matrix $A$  which gives the coordinates $(\bar t,\bar x^j)$  for $\bar R$ from the coordinates $(t,x^j)$ of  $R$,%\footnote{\br yo diría que es al revés, la que da las coordenadas de $(t,x^j)$ a partir de las de $(\bar t,\bar x^j)$ (la contraria a la que se escribió en (1))\er} 
 the {\em  velocity} and {\em speed} of $R$ measured by $\bar R$ are, resp., 
\begin{equation}\label{e_v}
\mathbf{{v}}=\av/a_{00} \qquad |\mathbf{v}|= \sqrt{\sum_{i=1}^3 (v^i)^2}
\end{equation}
in the notation of Exercise \ref{ex_1} (see also \cite[\S 5 (2)]{BLS}).

\begin{prop}
For any temporal model, $c:=\sqrt{|k|} \in [0,\infty]$ is the supremum of speeds measured between IFR's in  $\Si^*$.
\end{prop}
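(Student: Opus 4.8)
The plan is to reduce the statement to a direct computation with the transition matrices. By the convention of temporality (Definition~\ref{def_temp}), in a temporal model $\Si^*=\Si$ and the relevant orthochronous group $G$ --- the orthochronous subgroup of $\ok$ when $k\in(-\infty,0)$, of the Galilean group when $k=\omega$, and of the dual Galilean group when $k=0$ --- acts freely and transitively on $\Si^*$. Consequently the transition matrices arising between ordered pairs of elements of $\Si^*$ fill out exactly $G$ (as $G$ is a group acting transitively). Using \eqref{e_v}, the quantity to control is therefore $\sup\{\,|\av|/a_{00}:A\in G\,\}$ in the block notation of Exercise~\ref{ex_1}, and the claim becomes that this supremum equals $\sqrt{|k|}$.

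I would then treat the three temporal values of $k$ in turn. For $k=-c^2<0$: taking the $(0,0)$-entry of $A^{t}I^{(k)}A=I^{(k)}$ (with $I^{(k)}$ diagonal) gives $k\,a_{00}^2+|\av|^2=k$, that is, $|\av|^2=c^2(a_{00}^2-1)$; since $|\av|^2\ge0$ and, by orthochronicity, $a_{00}>0$, necessarily $a_{00}\ge1$ and $|\mathbf v|^2=|\av|^2/a_{00}^2=c^2(1-a_{00}^{-2})\in[0,c^2)$, so every measured speed is strictly below $c$. For the opposite estimate, since $G$ is conjugate to the orthochronous Lorentz group through $I^{(1/c)}$ and $I^{(c)}$ as in \eqref{e_conj}, it contains the boosts with $(t,x^1)$-block $\left(\begin{smallmatrix}\cosh\psi & c^{-1}\sinh\psi\\ c\sinh\psi & \cosh\psi\end{smallmatrix}\right)$ (identity on $x^2,x^3$), for which $a_{00}=\cosh\psi$; letting $\psi\to\infty$ shows $c$ is the supremum, though never attained. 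For $k=\omega$: an orthochronous Galilean matrix has $a_{00}=1$ with $\av\in\R^3$ arbitrary, so $|\mathbf v|=|\av|$ runs over $[0,\infty)$ and the supremum is $+\infty=\sqrt{|k|}$. For $k=0$: an orthochronous dual Galilean matrix has $a_{00}=1$ and $\av=0$, so $|\mathbf v|\equiv0=\sqrt{|k|}$.

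The only delicate point, in my view, is the assertion in the first paragraph that the transition matrices genuinely exhaust all of $G$ and not merely a proper sub-collection; this is precisely what parts (b)--(d) of the convention of temporality are for, since there $\Si$ is enlarged to be maximal so that $G$ acts transitively. A secondary, minor subtlety, specific to the Lorentzian case, is that $\sqrt{|k|}=c$ is a supremum but not a maximum, which requires $G$ to contain boosts of arbitrarily large rapidity --- immediate from the conjugacy with the Lorentz group, or from the explicit family above. Beyond these, there is no real obstacle: the proof amounts to the three short case computations with the identity $A^{t}I^{(k)}A=I^{(k)}$.
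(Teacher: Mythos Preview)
Your proof is correct and follows essentially the same approach as the paper's: the cases $k=0,\omega$ are read off from the explicit form of the (dual) Galilean group, and for $k\in(-\infty,0)$ one derives the relation $|\av|^2=c^2(a_{00}^2-1)$ and lets $a_{00}\to\infty$. The only cosmetic difference is that you obtain this relation directly from the $(0,0)$-entry of $A^{t}I^{(k)}A=I^{(k)}$, whereas the paper routes it through the conjugacy \eqref{e_conj} with the Lorentz group; your derivation is arguably more direct, and your explicit remarks on why the transition matrices exhaust $G$ and why $c$ is a supremum but not a maximum are welcome clarifications that the paper leaves implicit.
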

\begin{proof}
For $k=0,\omega$ this follows from \eqref{e_v} taking into account the expression of $\ok$ at each case  (see  items (2) and (3) at \S \ref{s2.3}). 
For $k\in (-\infty,0)$,  %normalizacion \eqref{e_norm} yields $|\mathbf{v}|^2=c^2-1/a_{00}^2$ 
using 
 \eqref{e_conj} the first column of $A$ is $(a_{00},\av)=(a_{00},cb^1,cb^2,cb^3)$ with
$\sum_i(b^i)^2=a_{00}^2-1$  (thus, $|\mathbf{v}|^2=c^2-c^2/a_{00}^2$) and
   $a_{00}\geq 1$  unbounded.
\end{proof}
As one would expect, this supremum is $\infty$ (i.e. the speeds are unbounded) in the Galilei-Newton case, finite equal to $c$ in the case of Special Relativity and strictly 0 in the dual Galilean case (where all IFR's lie at absolute rest). 

\begin{rem}\label{r_light} In principle, it is appealing to call $c$ the {\em speed of light}. Notice, however, that there is no mention neither to Electromagnetism nor to any other interaction in our approach. 
Nevertheless, an essential property of electromagnetism can justify that name. 
Namely, light is described by a wave {\em which propagates in vacuum}. An obvious natural hypothesis for IFR's is that the vacuum is ``equal'' for all of them, and, so, any physical scalar quantity measured  with respect to the vacuum must yield the same number for all of them. In particular, this would mean that all IFR's must measure the same speed of propagation of the light with respect to the vacuum. As the supremum $c$ is the unique speed equal for all of them, the following definition is justified.
\end{rem}
\begin{defi}\label{d_speedoflight}
For any temporal model, $c= \sqrt{|k|}$ is called the  {\em speed of light}.
\end{defi}
Anyway, the following digression about the physical content of this definition may be worthy. If one considered another interaction which also propagated in vacuum (say, gravitation) then the  arguments  in Remark \ref{r_light} would imply that its speed of propagation $c'$ with respect to vacuum would be the same $c$ as for light. %As emphasized by some authors  
As emphasized by some authors,  see \cite{Geroch},  there is no  logical contradiction   assuming that $c\neq c'$ and, thus, this question becomes  an experimental 
issue\footnote{However, recent measurements of gravitational waves show that the  speed of propagation of light and gravitation are equal with an extraordinary accuracy \cite{LIGO}.}. In the affirmative, these different interactions might allow one to construct different types of clocks and rods in order to measure the temporal and spatial coordinates. So, the name { %\em
	 IFR} should include the interactions which allow Postulates \ref{p1} and \ref{p2} to hold.

\section{First non-linearization}\label{s3}

General Relativity can be regarded as a first non-linear generalization of Special Relativity. Such nonlinearity comes from the fact that   Postulate \ref{p1}, namely, the global affine character of  spacetime, is being dropped and the set of all the events  is modeled  by a manifold. Nevertheless  (as apparent from \cite{BLS}),  Postulate \ref{p2} would make still sense if the symmetries stated there are regarded just as infinitesimal ones, at the tangent space of each event. 

 This  idea is well-established in the Lorentz case and it may seem very speculative in the other linear models of spacetimes.   However,  
this will be developed  briefly along this section  with a double aim: on the one hand, the  role of  observers will be emphasized and,  on the other,  the framework of  further issues relevant to the Lorentz-Finsler case will be   settled.  %understood.
Only in \S \ref{s4}  we will focus on the Lorentz case and will go beyond, in order to  reach  the Lorentz-Finsler generalization. 

\subsection{General case and signature change}\label{s3.1}
Assume now that the spacetime is described by a (smooth, connected) manifold $M$ and that our postulates %\ref{p1} and \ref{p2} 
are regarded as infinitesimal requirements of symmetry at the tangent space $T_pM$ of each $p\in M$, that is, around each event $p\in M$, one can find a set of coordinate charts  such that the relations \eqref{e_post2} occur only at $p$,  namely, considering normal coordinates. 

Then, we will have a set   $\Sm_p$ of linear bases at  each $T_pM$ which  will  play the role of (linear) IFR's at $p$. For simplicity, we will assume in what follows:

 (i)   $\Sm_p$ determines univocally some $k(p)\in S^1$ (i.e., the degenerate case  of non $k$-congruent solutions of Exercise \ref{ex_1}  is skipped), 

 (ii)  $\Sm_p$ is maximal (i.e.,  $\Sm_p=\Sm_p^*$,  consistently with the discussion above Theorem~\ref{t_k}),  and

(iii)  Consistently with  Definition \ref{def_temp}, the convention  of temporality will be assumed whenever $k(p) \not\in (0,\infty)$ (in particular, the  notion of future-directed timelike vectors  makes sense  then).

 Moreover, as an  extra hypothesis (or   third  postulate, as in \cite{BLS}) we assume:
\begin{itemize}
\item[(P3)] $\Sm_p$ varies smoothly in the bundle $LM$ of linear frames\footnote{$LM$ contains all the (ordered) linear bases of $T_pM$ for all  $p\in M$.} of $M$.
\end{itemize} 
Formally,  this means that $\Sm := \cup_{p\in M}\Sm_p$ is a smooth bundle embedded in $LM$
 (in the sense of a submanifold of $LM$ with the induced topology such that the  projection on $M$ is a submersion)   so that the function $k : M \rightarrow  S^1 $ becomes smooth.  %transverse to each tangent space $T_pM$ and, so, the function $k: M\rightarrow \R$ becomes smooth\footnote{\label{foot_trans}For the role of the condition of transversality of $\Sm$ and each $T_pM$ (which is necessary for the differentiability of $k$), see \cite[Prop. 12]{CJS14}, \cite[Defn. 2.7, Rem. 2.8]{JavSan18}. \br Comprobar si $k$ es smooth \er}. 

In general, one obtains then a signature changing metric $g$ which is Lorentzian  (resp. Riemannian)  in the  set $-\infty<k<0$ (resp. $0<k<\infty$). Following the terminology in  \cite{BLS, BS_jmp}, in the closed subset determined by 
$k=\omega$, one has a {\em Leibnizian structure}, that is, a non-vanishing  1-form  $\Omega$ ({\em absolute time form}) on $M$  and a Riemannian metric $h$ in the subbundle ker$(\Omega)$ of $TM$, being then $($ker$(\Omega), h)$ the {\em absolute space}\footnote{Such a structure is equivalent to having the 1-form $\Omega$ and a positive semidefinite 2-contravariant tensor $T$ of rank 3 with 
$i_\Omega T (:=T(\Omega,\cdot )) \equiv 0$, studied in \cite{Kunzle}. Indeed, such a $T$ induces a Riemannian metric in the dual of ker$(\Omega)$ and, then, in ker$(\Omega)$. Conversely, the Leibnizian structure yields a Riemannian metric on ker$(\Omega)$ and then in its dual; this yields the tensor $T$ by imposing that its radical is Span$\{\Omega\}$.}. Analogously, the region $k=0$
is endowed with an anti-Leibnizian structure, consisting in a non-vanishing vector field $W$ ({\em absolute rest field}) on $M$ and a Riemannian metric $h^*$ on the subbundle  ker$(W)$ of the cotangent bundle $TM^*$.

Let $g$ be  the semi-Riemannian (Lorentzian or Riemannian) metric  in the region $k\neq 0,\omega$ and $g^*$ the (physically equivalent) metric induced in the cotangent space. It is worth emphasizing that, in the region $k=0$, $g$ can be extended as a degenerate metric 
and $g^*$ cannot; however, $g^*$ matches smoothly with $h^*$ on ker$(W)$. Analogously, in the region $k=\omega$, $g^*$ can be extended as a degenerate metric, while $g$ matches smoothly with $h$ on ker$(\Omega)$.

Summing up, this first non-linear generalization of the IFR setting yields as a general model of spacetime a geometry governed by the smooth function $k$. Whenever $k\neq 0,\omega$,  a semi-Riemannian metric $g$ and its equivalent dual metric $g^*$ are obtained; in the regions $k=0$ or  $k=\omega$ either the metric $g$ or $g^*$ are extended as a degenerate metric   
and  additional geometric structures appear\footnote{Recall that models of  signature changing metrics have been studied at least since the influential  ``no boundary'' proposal by Hartle \& Hawking \cite{HH}, see for example \cite{Dray,WWV}. Moreover, the existence of an ``absolute time'' in the transition region has also been pointed out by several authors \cite[\S 2]{KK} (see also \cite{VJ}).}. 
The transition among these elements is smooth, as so is $\Sm$. 

\subsection{Space of observers} For convenience, let us introduce a new element by taking the most important information from $\Sm$.
\begin{defi}\label{d_So}
The {\em space of observers}  is the subset $\So$   of $TM$ containing the first vector of each basis at $\Sm$, and the     {\em space of observers at $p$} is $\So_p:= \So \cap T_pM$. That is:
\begin{enumerate}[(i)]
\item In the region  $k>0$, $\So$ contains all the  unit vectors for $g$,  and 
each $\So_p$ is a sphere. 

\item In the region $k<0$, $\So$ contains all the future-directed timelike unit vectors for $g$,  and  each $\So_p$ is a hyperboloid.  

\item In the region $k=\infty$, $\So$ is equal to $\Omega^{-1}(1)$,  and  each $\So_p$ is an affine hyperplane not containing $0$.

\item In the region $k=0$, $\So$ is equal to the  absolute rest vector field $W$ (so, each $\So_p$ is a subset containing a single non-zero tangent vector).
\end{enumerate}
\end{defi}
\begin{rem} If $n=$ dim $M$, then dim $LM= n(n+1)$  and $\Sm \subset LM$ is always a submanifold with dim $\Sm=(n+1)n/2$. Nevertheless,
$\So$ must be regarded as a subset of $TM$. Then, it becomes a smooth manifold of dimension $2n-1$ in the region $k\neq 0$ but   it collapses to a submanifold of dimension $n$ when  $k=0$.
\end{rem} 
The transition from Lorentzian to 
Riemannian through a region with $k=\omega$ can be easily understood by looking at $\So$  (see Fig. \ref{Fig1}).  We will not be  interested in the transition through a region with $k=0$. However, this could be described in a completely analogous way by defining a dual space of observers (constructed by picking at each point $p$ the first element of the  elements in the bases which are dual to those in $\Sm_p$). 
\newsavebox{\smlmat}% Box to store smallmatrix content
\savebox{\smlmat}{$\left(\begin{array}{c|c}k & 0\\
 	\hline 0 & 1\end{array}\right)$}
 	\newsavebox{\smlmatt}% Box to store smallmatrix content
\savebox{\smlmatt}{$\left(\begin{array}{c|c}1/k & 0\\
 	\hline 0 & 1\end{array}\right)$}
\begin{figure}
	\centering
\begin{tikzpicture}
\draw (-6,0) -- (6,0);
\draw[dashed] (-6.1,4.5) -- (-4,0) -- (-1.9,4.5);
\draw[thick, color=blue] (-6,4.5) .. controls (-4,1) .. (-2,4.5);
\draw[->] (-4,0) -- (-4,1.87); 
\node at (-4,-0.5)(s){$p_{-1}$};
\node at (-4,-1.2)(s){$k=-1$};
\node at (-4,2.2)(s){$e_0$};
\node at (-3.2,3.5)(s){$O_{p_{-1}}$};
\draw[thick, color=blue] (-2,2) -- (2,2);
\draw[->] (0,0) -- (0,2);
\node at (-0,-0.5)(s){$p_\omega$};
\node at (0,-1.2)(s){$k=\omega$};
\node at (0,2.2)(s){$e_0$};
\node at (1,2.5)(s){$O_{p_{\omega}}$};
\draw[thick, color=blue] (4,0) circle(1.6);
\draw[->] (4,0) -- (4,1.6);
\node at (4,-0.5)(s){$p_{1}$};
\node at (4,-1.2)(s){$k=1$};
\node at (4,1.87)(s){$e_0$};
\node at (5,1.87)(s){$O_{p_{1}}$};
\end{tikzpicture}
 \caption{\label{Fig1} Signature changing spacetime on $M=\R^2$. The natural coordinate basis $B=(e_0,e_1)\equiv (\partial_t,\partial_x)$ is assumed to lie on $\Sm$ at each point. The matrices of the metric $g$ and $g^*$ are~\usebox{\smlmat} and~\usebox{\smlmatt},
 	 respectively, with $k(t,x)=1/x\in S^1\setminus\{0\}$. 
 	The space of observers changes from a hyperbola to a line and to a circumference.
 }
 \end{figure}
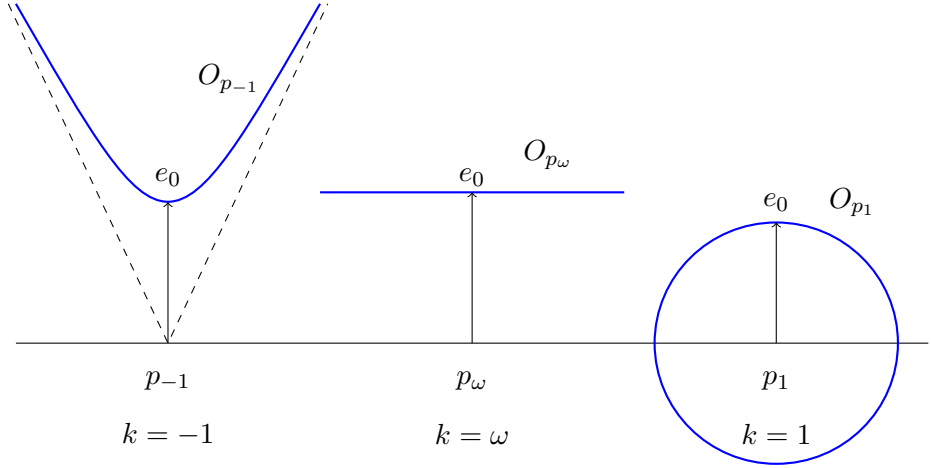
%\newpage
%
%\includegraphics[height=.6\textheight]{./FigTransitionObs.pdf}
%%\caption{\label{fig} 
%\begin{center}
%Transition
%\end{center}
%}

%\newpage

\subsection{ Pointwise variation of speed of light }\label{s3varying_c} In the region $-\infty <k <0$ the function $c(p)=\sqrt{|k(p)|}$ might be understood as a variation of the speed of light with the point at $M$. 
Such a possibility has been speculated since the  beginning of General Relativity, and was put forward in the 90's in relation to cosmological inflation and the horizon problem  (see for example 
\cite{Magueijo, Barrow,  Moffat, Petit},  as well as some criticism in \cite{Ellis, Uzan}). In order to avoid the circularity of using the light to define the 
units to measure its speed,   Barrow and Magueijo  \cite{BarrowM}
%pointed out that 
%the dimensional analysis of the physical constants suggest that 
argue that only the variation of  adimensional constants would have a true physical meaning; so, the variation of $c$ should be regarded as a variation of the   (adimensional)  fine structure constant  $\alpha=e^2/\hbar c 4\pi \epsilon_0$. 
% (elementary charge $e$, the permittivity of free space $\epsilon_0$, Planck's $\hbar$).

%\footnote{\br decir algo más de esta constante\er}

 Without deepening into these questions,  some comments about varying $c$ in our framework are in order. 
Recall first that, in the affine case obtained by assuming  Postulate \ref{p1},  to assume also  Postulate \ref{p2} would imply that all the IFR's  would be  using the same units for measurement and, then, the specific value of $c$ would depend of the chosen units. Indeed, the natural interpretation of the group $O^{(k)}(4,\R)$ for $k=-c^2\in (-\infty,0)$ is just the Lorentz group in some appropriate coordinates.
Thus, if one regarded the affine space $\Aff$ as a manifold and took different units at each point, then this could not be interpreted as a variable speed of light. 

 To measure   a varying speed of light  would rely on the
possibility to compare the units of measurement at different points.
In the affine case such a comparison would be possible if the interactions were invariant by translations (an unlikely possibility). 
In general, one would need  measurements involving magnitudes which are dimensionally independent  (in particular,   this would be achieved by measuring adimensional constants, as commented above). In principle, this might be achieved by
measuring essentially different interactions, as in the case of light and gravity propagation (see below  
Definition~\ref{d_speedoflight})\footnote{In the more speculative case of  a signature changing metric, the speed of light would change necessarily in the regions $k=0,\omega$. So, the possibility to measure a varying speed of light when $-\infty < k < 0$ would imply that the collapse  of the lightcones (to a line or a hyperplane)  could be measured gradually when approaching to those regions.}. 
Anyway, as we will see, the  Lorentz-Finsler viewpoint will open other  possibilities by using infinitesimal anisotropy.

\subsection{Relativistic vs Leibnizian structures}\label{s3.4}  A priori, the cases  $-\infty<k<0$ and $k=\omega$ (with constant $k$, and then $c$) are the physically  interesting ones, either as a model of spacetime or as an approximation to this model. Next, they will be briefly compared.

In the first case,  we will assume  $k=-1$ on all $M$, after our discussion in the previous subsection. So, one has a time-oriented Lorentzian metric $g$ and $\So$ is a fibered space on $M$ with fiber the hyperbolic space. Moreover, the Levi-Civita connection $\nabla^g$ is canonically associated with $g$, and any other affine
connection $\nabla$ parallelizing $g$ (i.e., satisfying $\nabla_\mu g_{\nu\rho}=0$) must be non-symmetric. 
%\footnote{Quito toda mención a Palatini porque Fidel no ha terminado lo que se esperaba (ni se espera ya que lo termine de aquí a marzo. Lo dejo como nota a pie para recuperarlo en el artículo de Fidel. En cualquier caso, en la versión 2 permanece escrito. Debemos borrar también las referncias \cite{FFR, Betal}\\ The so-called Palatini formalism \br \cite{FFR} \er gives an extra support to  choose $\nabla^g$ among all the symmetric connections   in order to model free fall. Remarkably, a  non-symmetric connection $\nabla$  which is critical for the Einstein-Hilbert-Palatini action would yield the same pregeodesics (i.e, geodesics up to a reparametrization) as $\nabla^g$. So, such a $\nabla$ would yield the same  notion of free fall as $\nabla^g$, and a  choice of $\nabla$ different to $\nabla^g$  would have a limited physical interest~\cite{Betal}. Anyway, $\nabla$ yields convex neighborhoods  recovering the interpretation of Special Relativity as a first order approximation to General Relativity.} 
This selects $\nabla^g$ and provides a sense of free fall and lightlike geodesics (compare with EPS later).

In the case $k=\omega$, the Leibnizian structure consisting in the absolute time form $\Omega$ and the absolute space $($ker$(\Omega), h)$  on $M$  described in subsection \ref{s3.1} is obtained. These structures were studied systematically in \cite{Kunzle} and \cite{BS_jmp}. 
In this setting, one  considers {\em Galilean connections}, that is,  affine  connections $\nabla$ which parallelize both, $\Omega$ and $h$.
It is worth pointing out that the set of all the Galilean connections has the same degrees of freedom as the set of all   affine  connections (symmetric or not) parallelizing a Lorentzian metric $g$. However, a symmetric Galilean connection will exist if and only if $\Omega$ is closed ($d\Omega=0$), that is, locally  $\Omega=d\ttt$ for some function $\ttt$. Nevertheless, in this case, there is no univocally determined symmetric connection. 
Moreover, this happens even if $\Omega$ is exact, that is, $\Omega= d \ttt $ for some function $\ttt$  defined globally on $M$, which will be  called the {\em absolute time} ($\ttt$ is unique up to an additive constant).   Indeed, an explicit 
Koszul-type formula  reconstructs all the symmetric Galilean connections in terms of two data\footnote{Such a formula can be extended to include non-symmetric connections by adding as a third datum a suitable component of the torsion, see  \cite[Th. 27]{BS_jmp}.} \cite[Cor. 28]{BS_jmp},  namely: the {\em gravitational  field} %$\mathcal{G}$ 
(a vector field in the absolute space, that is, a section of the bundle ker$(\Omega) \rightarrow M$) and the {\em vorticity}  
%$\omega$ 
(a skew symmetric 2-form on the vector bundle ker$(\Omega)$). 

 In conclusion,  relativistic spacetimes are preferred to Leibnizian ones from the viewpoint of foundations, because  of two celebrated properties: 
(a) they  permit to model a finite speed of propagation in vacuum  (recall that  observers appear now at each event as infinitesimal approximations to IFR's and, so, the arguments in Remark \ref{r_light} apply), and 
(b) they select a unique affine connection in the set of all the   connections parallelizing the geometric structure, while Leibnizian ones require  the gravitational and vorticity fields as an extra input.

 In the next section, we will focus only on relativistic spacetimes and the Finslerian extensions. Nevertheless, some  previous elements %of the Leibnizian structures, 
 serve 
as a background for the Lorentz and Lorentz-Finsler cases and they can be compared a posteriori with them 
 (see
 Table~\ref{tab1}). 
%\footnote{\br En la tabla haría las siguientes modificaciones (Tal vez debamos de hablarlo): (1) Dos veces, ``not developed yet'' $\rightarrow$ ``not developed (as far as we know)'' (parece que cabría bien en el espacio disponible), (2) en la última fila, encabezaría cada casilla a partir de la segunda: {\bf Special Relat. $(g_L,C)$ ; Modified Special Relat. $(L_0,\C_0)$; General Relat. $g_1$; Finsler spacetime $L$}, (3) si se adapta mejor al espacio disponible, añadir alguna palabra más, digamos: but includes VSR (proper $\rightarrow$ but includes the case VSR (with a proper \er} 
We point out a pair of them so that the interested reader might come back here later: 
 
 (1) Leibnizian structure $(\Omega,h)$ vs  cone triple $(\Omega,T,F)$ (which is useful to define and to handle any cone structure $\C$, Lorentz or Lorentz-Finsler, see Definition \ref{d_cone}, Remark \ref{r_cone}(\ref{r_cone2})).   Notice that when $F$ comes from a Riemannian metric $h$, then the Leibnizian structure can be regarded as a sort of limit when $\lambda\rightarrow \infty$ of the triples $(\lambda \Omega,T/\lambda,F)$, which ``open'' the cone $\C$.
 
 (2) Chronometric vs EPS approach to spacetime (\S \ref{s5}). The Leibnizian structure   $(\Omega,h)$ (eventually, with $\Omega=d\ttt$)  gives a chronometric approach to spacetime, in a similar way as the Lorentz metric does in Relativitiy. However, the former requires an additional input (an affine connection)  in order to define free fall. So, the EPS approach (at least the axioms which do not consider light propagation) might  also be interesting in the Leibnizian case.    
In contrast,   the Lorentz-Finsler metric $L$ will provide timelike and lightlike geodesics in a very similar way as the Lorentz metric $g$, in spite of the differences between the Levi-Civita $\nabla^g$ and the anisotropic connection  (see footnote \ref{foot_anisotropic})  determined by $L$.

\begin{note} {\em Newton-Leibniz controversy}. To end this section it is worth pointing out that the notion of Leibnizian structure provides a precise mathematical description of  a historical controversy between Leibniz and Newton. Roughly speaking, Leibniz criticized  Newton's arguments about  IFR's 
 %, space and time, 
 by pointing out that  the Euclidean space perceived  by an observer is equal even after a rotation of the observer's coordinates. So, he claimed that one could not detect whether these axes are being rotated at different times. Newton replied that  spinning water in a bucket would detect whether the observer is rotating or not. From the mathematical viewpoint, Newton was using the structure of a  Galilei-Newton spacetime, as described in \S \ref{s2.3} (that is, the  linear quadratic classical space + time approximation in Table~\ref{tab1}). So, the overall affine structure of the (four dimensional) spacetime yields a natural affine connection, which can be used to detect rotation.  Leibniz, however, is considering physical spacetime only as a manifold endowed with a Leibnizian structure  (that is, he drops the spacetime affine structure and considers only the pointwise quadratic  first nonlinear space + time in Table~\ref{tab1}). So, with these elements, no affine connection is determined, and rotation cannot be measured.
    Summing up, Leibniz was right pointing out that, only with the Leibnizian structure  on $M$ at hand, no Galilean connection is selected\footnote{In our opinion, this justifies the name {\em Leibnizian} used here (following \cite{BS_jmp}), compare with \cite{Kunzle, Sancho}.}. However, Newton  did select such a connection by guessing the further affine structure of~$M$.
\end{note}

%, Fig. \ref{fig2}?? summarizes such models.

{\footnotesize
%\newpage
%\begin{figure}% \label{fig2}
%\includegraphics[height=.8\textheight]{./FigTheorieSpacetime.pdf}
\begin{table}%[htb]
	\centering
	\begin{tabular}{|m{1.5cm}|m{2.4cm}|m{2.4cm}|m{2.4cm}|m{2.4cm}|}
		\hline
		\multirow{2}{1.5cm}{MODEL}  & \multicolumn{2}{m{5cm}|}{{\bf Linear space: affine Aff with vector $V$} 
(translation-invariant elements)
Geodesics $\equiv$ straight lines} & \multicolumn{2}{m{5cm}|}{{\bf Smooth connected manifold M}
(pointwise dependent elements)} \\ \cline{2-5}
& Quadratic forms
(doubly  linear) & No quadratic 
restriction
& First nonlinea\-rizat.: pointwise quadratic & Second nonlinea\-rizat.:  no  quadratic restriction
\\ \hline

SPACE &  {\bf\footnotesize Euclidean scalar 
product $g_E$ on $V$} 

\vspace{0.5cm}
Symmetry $O(n)$ & {\bf Minkowski norm $\|\cdot\|$}

\vspace{0.5cm}
(drop parallelogram identity + reversibility) &
{\bf Riemannian metric $g_0$}

\vspace{0.5cm}
Unit sphere bundle = pointwise ellipsoid.
Levi-Civita natural mathematical choice. & {\bf Finsler metric $F (L=F^2)$}

\vspace{0.5cm}
Indicatrix = 
pointwise strongly convex hypers.
Cartan connection \\ \hline

Space + time Classic &

{\bf Galilei-Newton
$(\ttt, g_E)$ }

\vspace{0.5cm}
-$\ttt$ absolute time 
(on $V$): Non-zero linear form 

\vspace{0.5cm}
-$({\rm Ker}\, \ttt, g_E)$ absolute space:  $g_E$  Eucl. scalar product on ${\rm Ker}\, \ttt$.

\vspace{0.5cm}
Symmetry: orthochr. Galilean group &
{\bf Non-quadratic Galilei-Newton
$(\ttt, \|\cdot\|)$ }

\vspace{0.5cm}
-Replace 
 $g_E$ in Galilei-Newton by a norm.
 
 \vspace{0.5cm}
Not developed (as far as we know) &
{\bf Leibnizian structure:}

\vspace{0.5cm}
Non-vanishing $1$-form $\Omega$ (eventually, $\Omega=d\ttt$) with a Riemannian metric on the bundle ${\rm Ker}(\Omega)$.

\vspace{0.5cm}
Required to choose a linear connection parallelizing $\Omega$ and $g_R$) &
{\bf Leibniz-Finsler str. }

\vspace{0.5cm}
-Replace the Riemannian metric on 
${\rm Ker}(\Omega)$ by a Finslerian one.

\vspace{0.5cm}
Not developed (as far as we know) \\ \hline

Space
-time
Relat. &

{\bf Special Relat. $(g_L, C)$ }

\vspace{0.5cm}
-  $g_L$  Lorentzian scalar product $(+,...,+,-)$

\vspace{0.5cm}
- $C$ time-orientation
(choice of one between 2 cones)

\vspace{0.5cm}
Symmetry : orthochr. Lorentz group $O^\uparrow_1(n)$  &
{\bf Modified Special Relat. $(L_0,\C_0)$}

\vspace{0.5cm}
-$C_0$ cone 

\vspace{0.5cm}
-$L_0$ Lorentz-norm on $\C_0$ causal vectors
(eventually $\C_0$ determined from $L_0$)

\vspace{0.5cm}
No symmetry but includes the case VSR  (with proper a subgroup of  $O^\uparrow_1(n)$ ) &

{\bf %Time-oriented Lorentzian metric  
	General Relat. $g_1$: }

\vspace{0.5cm}
Pointwise smooth Lorentzian scalar product $g_1$ continuously time-oriented 

\vspace{0.5cm}
Levi-Civita connection: free fall, ligthlike pregeodesics, gravitational force &
{\bf Finsler spacetime $L$ (with a cone str. $\C$)}

\vspace{0.5cm}
Pointwise smooth Lorentz-Minkowski norm. 

\vspace{0.5cm}
Geodesics determined by Cartan (and Chern etc.) connection.

\vspace{0.5cm}
$\C$ pregeodesics independent of $L$ 

\vspace{0.5cm}
Anisotropy with causal directions (possibly due to matter/energy)
\\ \hline
	\end{tabular}
	\caption{\label{tab1} Classical models of non-quantum space and time (linear models and their  non-linearizations) }
	%\label{tabla:fusionandoceldas}
\end{table}
}

%\newpage
%\begin{figure} \label{fig2}
%\includegraphics[height=.8\textheight]{./FigTheorieSpacetime.pdf}
%\caption{%\label{fig2} 
%%\begin{center}
%Classical models of non-quantum space and time
%(linear models and first non-linearized)  CAMBIAR EN EL ESQUEMA n, n+1 por n-1, n (\br corregir ortochronous \er)
%%\end{center}
%%
%}
%\end{figure}
%\newpage
\section{Second non-linearization}\label{s4}

\subsection{Background: norms, cones and Lorentz-Finsler metrics}\label{s4.1} In order to show rigorously the emergence of the notion of Finsler spacetime, some purely  geometric elements are stressed first. Even though some of them are elementary, they will be necessary to make precise discussions. So, the experimented reader can skip some parts and come back when necessary.

The first ones come from classical norms on a (finite-dimensional, real) $n$-vector space $V$ and Finsler Geometry; they are carefully explained  in \cite{JavSan11}.
\begin{defi}\label{d_minkowski} A {\em Minkowski norm} on $V$ is a map $F_0:
%\parallel\cdot\parallel: 
V\rightarrow \R$ satisfying
\begin{enumerate}[(i)]
\item positiveness: $F_0(v)
%\| v \| 
\geq 0$, with equality  if  and only
if $v=0$,
\item positive homogeneity: $F_0( \lambda v ) =
\lambda F_0(v)$ for all  $\lambda > 0$,
\item strongly convex indicatrix: $F_0
%\| \cdot \|
$ is
smooth 
%(say, 
%$C^\infty$) 
away from $0$ and the {\em fundamental
tensor field} $g$ defined as
the Hessian of $\frac 12 F_0^2
%\| \cdot \| ^2
$ 
is positive definite  on $V\setminus\{0\}$. 
\end{enumerate}
\end{defi}
\begin{rem}\label{r_minkowski}
Notice about this definition:

\ben\item {\em Positive homogeneity}. This requirement only for $\lambda>0$ enhances the applications of Finsler Geometry\footnote{Including those for relativistic stationary spacetimes, see \cite{FHS} and references therein.}, and it will be enough for our purposes.
Positive homogeneity implies that $F_0$ is univocallly determined by its {\em indicatrix} (unit sphere) $\Sigma_0:=F^{-1}
%\| \cdot \|^{-1}
(1)$. In particular,  the full homogeneity of $F_0$  becomes equivalent to the symmetry of $\Sigma_0$ with respect to the origin. 

\item \label{r_minkowski_item2}{\em Smoothness}. The standard definition of norm implies that they are only continuous.  We assume smoothness (say, $C^\infty$,  pointing out the cases when lower regularity becomes relevant)   away from 0.
Using (ii),  this is clearly equivalent to the smoothness of $\Sigma_0$.   

%In any case, the indicatrix determines 

\item\label{r3_def4.2} {\em Role of  triangle inequality}. It is not imposed directly, however:
\ben\item
Triangle inequality becomes equivalent (for any 1-homogeneous function smooth away from 0) to the {\em convexity of $\Sigma_0$} (i.e., its inner-pointing second fundamental form $\sigma$ is positive semidefinite). Moreover, it is  also equivalent to the {\em convexity of the open unit ball} $B_0:=F_0
%\| \cdot \|
^{-1}([0,1))$ (all the segments connecting points $u,v\in B_0$ are included in $B_0$).

\item The strict triangle inequality becomes equivalent to the {\em strict convexity of} $\Sigma_0$
(the hyperplane tangent to $\Sigma_0$ at each point $p$ only  intersects $\Sigma_0$  at $p$).
 Moreover, it is  also equivalent to the {\em strict convexity of the closed unit ball} $\bar B_0$ (segments connecting points $u,v\in \bar B_0$ are included in the open ball  $B_0$ up to the endpoints $u,v$).

\item  Assuming (i) and (ii),  the hypothesis  (iii) becomes equivalent to the {\em strong convexity of  $\Sigma$} ($\sigma$ is positive definite),  which   is more restrictive than its strict convexity. 
\een

\item\label{r_minkowski4} {\em Conic Minkowski norms}. These norms are  as in Definition \ref{d_minkowski} just by allowing the map $F_0
%\parallel\cdot\parallel
$  to be defined only on a cone domain (see  Definition \ref{d_scone}  below) $A_0$ of $V$. All the
previous considerations on the triangle inequality extend trivially to  such  conic Minkowski norms.
 \item \label{r_minkowski5} {\em Scalar products}. Norms coming from  (Euclidean) scalar products are Minkowski. Conversely, a Minkowski norm comes from a scalar product under one (and, then, both) of the following properties:
\ben\item The classical parallelogram identity holds.
\item $F_0^2
%\parallel \cdot \parallel^2
$ is  $C^2$-smooth at zero \cite[Proposition 4.1]{Warner65}.
\een
Recall also that, clearly, any norm coming from a (Euclidean or Lorentzian) scalar product is  determined by its value on a cone domain. %  \br or even only on  a cone, as they contain $n$ linearly independent directions. \er \footnote{\br En el cono estaría determinada solo la clase conforme, no?\er}
\een
\end{rem}

%\bb Next, let $M$ be a manifold. \eb
%\pi: TM\rightarrow M, \pi_*: T(TM)\rightarrow TM $ the natural projections.\eb
\begin{defi}\label{d_finslerm}
A {\em Finsler metric} $F$  on a manifold $M$  is a function $F:TM\rightarrow  \R$ satisfiying: (i) $F$ is smooth away from the zero section $\mathbf{0}\subset TM$ and (ii) the restriction $F_p$ of $F$ to each tangent space  $T_pM$, $p\in M$, is a Minkowski norm.
 
\end{defi}
\begin{rem}\label{r_finslerm}
Notice about this definition:

\ben\item {\em 2-homogeneity}. Taking $F^2$ instead of $F$, Finsler metrics can be defined alternatively as  positive  2-homogeneous functions (this will be convenient for their Lorentzian  extensions). What is more, then the  $C^2$-smoothability of $F^2$ at 0  would imply  that it comes from a Riemannian metric (recall Remark \ref{r_minkowski} \eqref{r_minkowski5}). 

\item\label{r2_def4.2} {\em Role of the indicatrix}. As $F$ is determined by its  indicatrix $F^{-1}(1)$, 
a Finsler metric can be  defined alternatively as a  smooth hypersurface $\Sigma$ embedded in $TM$ satisfying appropriate conditions, namely:  (a)~$\Sigma$ intersects transversely\footnote{\label{foot_trans}For the role of the condition of transversality  see \cite[Prop. 12]{CJS14} and \cite[Def. 2.7, Rem. 2.8]{JavSan18}.} each $T_pM$ and (b)~this intersection  $\Sigma_p:= \Sigma \cap T_pM$ is a strongly convex compact connected  embedded  hypersurface whose inner domain $B_p$  (such that $\Sigma_p= \partial B_p$, where $\partial$ denotes the  boundary in $V$)\footnote{Such a $B_p$  exists by Jordan-Brower theorem.} contains the zero vector $0_p$. 

\item {\em Fundamental tensor on   a  vector bundle}. Each $F_p$ defines a fundamental tensor field on $T_pM\setminus \{0\}$ and, so, a 2-covariant tensor on each fiber of the (slit) tangent bundle $\pi: TM\setminus \mathbf{0} \rightarrow M$. We will use  the letter $g$ to denote such a tensor field, so that, for each $v\in TM\setminus \mathbf{0}$, $g_v$ will be a tensor on $T_pM$, being $p=\pi(v)$. Clearly, the definition of Finsler metric and fundamental tensor can be extended to any vector bundle, not necessarily the tangent one.   
\een
\end{rem}

\noindent  The rest of elements involves the Lorentz-Finsler case, and   we follow
\cite{JavSan14}. 
We start with the definition of cone. For our purposes, the next one is enough.  A more intrinsic  definition 
can be seen in \cite[Def. 2.1]{JavSan18}
(the equivalence and related optimal assumptions are analyzed in \cite[\S 2.1]{JavSan18}).  

\begin{defi} \label{d_scone} A (strong) {\em cone} $\C_0$ in  $V$ is any embedded hypersurface which can be constructed as follows: choose a hyperplane $\Pi\subset V$ which does not contain 0 and a strongly convex compact connected  embedded  $(n-2)$-hypersurface $S_0\subset \Pi$, take all  the  half-lines  from $0$ to  the points   of $S_0$ and define $\C_0$ as the union of all these half-lines except $0$. 

Then, the {\em  cone domain} is the open subset $A_0$  obtained analogously by taking the (open) half-lines from $0$ to each point of the inner domain $B_0$ of $S_0$ in $\Pi$ (so that  the boundary of $A_0$ in $V\setminus\{0\}$ is $\C_0$).   \end{defi}

\begin{defi} \label{d_cone} 
A {\em cone structure} on a manifold $M$ is a smooth embedded hypersurface $\C\subset TM$ such that:
 (a) $\C$ intersects transversely each $T_pM$ and (b) this intersection  $\C_p:= \Sigma \cap T_pM$ is a cone on $T_pM$. 
 Then, the {\em cone (structure) domain} is $A=\cup_{p\in M}A_p$, where each $A_p$ is the cone domain of $\C_p$. A vector $v\in A$ (resp. $v\in \C$; $v\in A\cup \C$) is called {\em timelike} (resp. {\em lightlike}; {\em causal}). 
\end{defi}

\begin{rem}\label{r_cone}
Notice about this definition:

\ben
\item\label{r_cone1} {\em Smoothness and transversality.}
Intuitively, a cone structure is just to smoothly put a cone at each $T_pM$, $p\in M$. From the formal viewpoint, however, this cannot be  deduced only from the smoothness of $\C$, making necessary assumption (a) (recall Rem. \ref{r_finslerm} (\ref{r2_def4.2}) and footnote \ref{foot_trans}, or the discussion around \cite[Fig. 2]{CJS14}).

\item\label{r_cone2} {\em Cone triples.} Any cone structure $\C$ can be determined (in a highly non-unique way) by means of a  cone triple $(\Omega, T, F)$, where $\Omega$ is any {\em timelike} 1-form on $M$ (i.e. $\Omega(v)>0$ for all causal $v$), $T$ is any timelike vector field with $\Omega(T)\equiv 1$ and $F$ is the unique Finsler metric on ker$(\Omega)$ such that $F(w)T+w\in \C$ for any $w\in $ ker$(\Omega)\setminus \mathbf{0}$ \cite[\S 2.4]{JavSan18}. Conversely, any   $(\Omega, T, F)$ with $\Omega(T)\equiv 1$ and $F$ Finsler on ker$(\Omega)$ is the cone triple of some cone structure $\C$.

\item\label{r_cone3} {\em Extended classical Causality.} $\C$ allows one to extend basic elements of Causality of spacetimes such as the chronological $\ll$, strict causal $<$, causal $\leq$ and horismotic $\rightarrow$ relations ($p\rightarrow q$ when $p< q$ and $p\not\ll q$) and, thus, the chronological/causal futures and pasts of a point,  $I^+(p), I^-(p)$ / $J^+(p), J^-(p)$. In particular, {\em cone geodesics} are defined as locally  horismotic curves, and they generalize the future-directed lightlike pregeodesics %(geodesics up to reparametrization) 
associated with the conformal structure of any Lorentz metric. 
\een
\end{rem}
 In the following, we will say that a function is smooth in a manifold with boundary (contained in a regular manifold $M$)  if it can be (locally) extended to a smooth function on an open subset of $M$.  
\begin{defi}\label{d_LMnorm} Let $\C_0$ be a cone on $V$ and $\bar A_0=\C_0\cup A_0$.
A {\em (properly) Lorentz-Minkowski norm} with cone $\C_0$ is a smooth map $L_0: \bar A_0 \rightarrow \R$ satisfying:
\begin{enumerate}[(i)]
	\item $L_0(v)\geq 0$ with equality if and only if $v\in \C_0$.
\item $L_0$ is positive 2-homogeneous: $L_0(\lambda v)=\lambda^2v$ for all $v\in \bar A_0$ and $\lambda>0$.
\item The fundamental tensor $g$ obtained as the Hessian of $\frac{1}{2}L_0$ has Lorentz\-ian signature $(+,-,\dots ,-)$ on $A_0$.
\end{enumerate}
\end{defi}

\begin{rem}\label{d_LMnormrem}
Consistently with the positive definite case,  let us observe the following:  

\ben \item A less redundant definition for $L_0$ (as well as for the Lorentz-Finsler metric $L$ below) can be carried out without prescribing the cone $\C_0$, see   \cite[Def. 3.1, 3.5]{JavSan18}.

\item Two homogeneity for $L_0$ is preferred to $1$-homogeneity  because of the general equality $L_0(v)=g_v(v,v)$. Notice  also that the Lorentzian signature is changed with respect to previous sections and, consistently, if $L_0$ is smoothly extended around any $v\in \C_0$, then $L_0$ %the extension 
must become negative away from $\bar A_0$.

\item\label{d_LMnorm_i3} $L_0$ is determined by its indicatrix $\Sigma_0=L_0^{-1}(1)$, which is now {\em strongly concave} and {\em asymptotic to} $\C_0$. Indeed, a Lorentz-Finsler metric could be defined alternatively as a strongly concave hypersurface $\Sigma_0$ in $A_0$ which is asymptotic to some cone structure $\C_0$ under the mild technical condition that the map $A_0\ni v\mapsto L_0(v)$ such that $v/L_0(v)\in \Sigma_0$, extend smoothly to $\C_0$ with non-degenerate\footnote{These conditions would be satisfied by  hypersurfaces  suitably $C^2$-close  to the space of observers $\So$ of any relativistic spacetime  (notice that some issues appear involving the extendability of $L$ to the cone and whether the cone is prescribed or not),  and they can be constructed for any cone (recall Rem. \ref{r_finslerspacetime}(\ref{i_4}) below).} $g$.

\item \label{d_LMnorm_i4} All the properties related to the triangle inequality in the positive definite case (which were associated with the convexity of the indicatrix and held for conic Minkowski norms, Remark \ref{r_minkowski} (\ref{r3_def4.2})) are automatically  translated now  as reverse triangle inequalities in the Lorentz-Finsler case (associated with the concaveness of $\Sigma_0$).

\item Even though  $\bar A_0 \subset V\setminus\{0\}$, $L_0$ can be continuously extended to $0$ ($L_0(0)=0$). However,  the smoothness of this extension\footnote{ Recall that, for any function $L_0$ on $\bar A_0\cup \{0\}\subset V$ (with $A_0$ a cone domain), the elementary definition of existence of a  differential map at $0$ makes sense  because $0$ is an  accumulation point of the domain   of $L_0$ and its uniqueness is guaranteed because  $A_0$ contains $n$ independent directions converging to 0.} depends on  whether $L_0$ comes from a Lorentzian scalar product, as in the positive definite case. 
\item  It is possible to smoothly extend $L$ (preserving  the $2$-homogeneity) to an open conic subset $A^*_0$ which contains $\bar A_0$ (recall that $0\notin \bar A_0$). This extension is far from unique, but the fundamental tensor in the boundary is well-determined. 
\een
\end{rem}
\begin{defi}\label{d_finslerspacetime} Let $\C$ be a cone structure on $M$ and $\bar A=\C\cup A$.
A {\em  (properly) Lorentz-Finsler metric} with cone $\C$ is a smooth map $L: \bar A \rightarrow \R$ satisfying that the restriction  $L_p$ of $L$  to each $T_pM \cap \bar A$ is a Lorentz-Minkowski norm. Then, $(M,L)$ is a {\em
  (properly)  Finsler spacetime}.\end{defi}

\begin{rem}\label{r_finslerspacetime}
The following results on Finsler spacetimes will be relevant:

\ben\item\label{i_1} Any Lorentz-Finsler metric can be extended to $TM\setminus \mathbf{0}$ as a smooth 2-homogeneous function with fundamental tensor $g$ of Lorentzian signature, see \cite{Min16}. However, such an extension is highly non-unique and, as we will see, it is not justified by direct measures of observers.

\item\label{i_2} Given $L$, timelike and lightlike geodesics are naturally defined and they satisfy local maximizing properties which extend  those of relativistic spacetimes (recall Rem. \ref{d_LMnormrem}(\ref{d_LMnorm_i4})  and \cite[Prop. 6.5]{JavSan18}).  In particular, the lightlike pregeodesics of $L$ coincide with the cone geodesics of $\C$ \cite[\S 6.2]{JavSan18}. 

\item \label{i_3} Thus, all the Lorentz-Finsler metrics  with the same cone structure have the same lightlike pregeodesics. Two such metrics $L_1, L_2$ are called {\em anisotropically equivalent} and they satisfy $L_2=\mu L_1$ for some $0$-homogeneous function $\mu>0$ on $\bar A$ \cite[\S 3.3]{JavSan18}.   

\item\label{i_4} Any cone structure $\C$ is associated with a Lorentz-Finsler metric $L$ (and, then, with its anisotropically equivalent class). Indeed, if $\C$ is determined by a cone triple $(\Omega,T,F)$, one can construct such an $L$ starting at the map \begin{equation}
\label{eG} G(v):=\Omega(v)^2-F(\pi_2(v))^2,  \qquad \forall v\in \bar A,\end{equation}
where $\pi_2: TM = \hbox{Span} (T) \oplus \hbox{ker} (\Omega ) \rightarrow \hbox{ker} (\Omega )$ is the natural projection.
 $G$ satisfies all the required properties of $L$ except the differentiability on Span$(T)$, the latter because of the lack of differentiability of $F^2$ at $0$ when it is not Riemannian.  Indeed, the indicatrix $G^{-1}(1) \subset A$ is not smooth precisely on $T$, that is, only at the point $T_p$ on each $p$. However, standard techniques of smoothability for convex functions allow one to smoothen $G$ around $T$ obtaining the required $L$ \cite[\S 5.2]{JavSan18}.

\item\label{i_5} The lack of differentiability of $G$ above is analogous to the  well-known lack of differentibility of any product of (non-Riemannian) Finsler manifolds. Indeed, if $(M_1,F_1$) is a Finsler manifold then $ dt^2  \oplus (\pm F_1^2)$ are not smooth as Finsler or Lorentz-Finsler metrics on  $\R \times M_1$ along the direction $\partial_t$. This problem   prevents  the extension to the Lorentz-Finsler case of the trivial procedure to construct a relativistic product spacetime starting at a Riemannian manifold. 
 \item\label{i_6}  Given a Lorentz-Finsler metric, there exists a univocally determined $A$-anisotropic connection which is torsion-free and parallel. Moreover, when we consider a properly Lorentz-Finsler metric, this $A$-anisotropic connection can be extended to an open subset $A^*$ which contains $\bar A\setminus \bf 0$. 
 As the extension away from  $\bar A$ is highly non-unique, we will speak about $\bar A$-anisotropic connections. When $A=TM\setminus\bf 0$, we will just say 
  {\em anisotropic connection}\footnote{\label{foot_anisotropic} Essentially, this is a   connection where, formally,  the  Christoffel symbols  of a chart $(U,\varphi)$  depend also on the direction and, so, they are functions on   $TU\cap A\subset TM\setminus \mathbf{0}$,   which are  positive homogeneous of degree zero.  The name  and a  thorough study of $A$-anisotropic connections  were given  in \cite{Jav1, Jav2}; see also  \cite{ECS1, ECS2}  for a  study of   connections  on fiber bundles  from a more general viewpoint.}.  % \footnote{\bb For a precise definition of  $A$-anisotropic connection see \cite{Jav1,Jav2}. A quick way to understand these connections is by taking into account that in a chart $(U,\varphi)$, the Christoffel symbols of these connections are functions $\Gamma_{ij}^k:TU\cap A\rightarrow \R$ which are $0$-homogeneous.  \eb}  \eb  
\een
\end{rem}
Due to this last item, the definitions of some classes of Finsler spacetimes such as the static ones have included the possibility to have some non-smooth directions  \cite{CaStan16, CaStan18, LPH12}.  However, the  smoothing procedure  mentioned  in part (\ref{i_4}) is also applicable to these cases. This shows that, from the foundations viewpoint, the motivation for  non-smooth metrics is not stronger  for the  Lorentz-Finsler case  than for classical relativistic spacetimes \cite[\S 4.2, \S4.4]{JavSan18}.

  \begin{note}\label{note_improperLF} Nevertheless, there are some physical considerations  (see  \S \ref{s_6.1_modifiedSpecialRelativity})  which lead to examples where $L$ is not  smoothly  extendible to  the cone structure $\C$,  even if: 
  
  (i)  its  cone   $\C$ is smooth (so, the cone geodesics are well defined), and 
  
  (ii) the  $A$-anisotropic   connection can be smoothly extended to $\C$ (so,  the  Finslerian curvature tensors  are well-defined on $\C$). 
  
  Such examples could also be included in our definition of  Lorentz-Finsler metrics and spacetimes, as all 
the relevant geometric properties   remain. However, we will consider for simplicity that $L$ is smooth at $\C$ and, when this property does not hold, we refer to them  as {\em improper} and we will discuss whether (i) and (ii) hold then.   Accordingly  (and consistently with \cite[Def. 3.1]{JavSan18}), an {\em improper}  Lorentz-Minkowski norm $L_0$ satisfies all the properties in Def. \ref{d_LMnorm} but the differentiablility at $L_0^{-1}(0)$. 

 Remarkably,  a large class of spacetimes satisfying both conditions  (i) and (ii)  can be found following \cite{PW11}.  Namely,  they hold for any   two-homogeneous function $L$ defined  on  the  set of causal vectors  $\bar A$  determined  by  a cone structure $\C$ such that:  (a) $L$  is zero on $\C$ and  determines a Lorentz-Finsler metric in the interior $A$ of $\bar A$ and  (b)  there is a power of $L$ which is smooth on $\C$ with non-degenerate Hessian     (notice that, in \cite{PW11},  $L$ is assumed to be defined on the whole $TM$). 

 Anyway,   there are some examples of Finsler spacetimes in  the  literature that do not even  satisfy   our weaker definition of improper Lorentz-Finsler spacetime, such 
 %\footnote{\br M. No entiendo si aquí te refieres a que no es improper Lorentz-Finsler o a que sí lo es, pero no satisface las condiciones (i) (ii). Supongo que te refires a lo primero, por lo que tal vez pondría ``not even satisfy  our weakest definition of improper Lorentz-Finsler'' \er } \er 
 as Randers spacetimes or those introduced by Kostelecky \cite{Kos11,KosRus12}, which are the effective model of some particles with no GR background (see the discussion in \cite[Appendix A, B]{JavSan18}). 
  \end{note}

\subsection{Physical intuitions for Finsler spacetimes}\label{s_phys_intuitions}

Next, our aim is to justify physically our  definition of Finsler spacetime (Def.~\ref{d_finslerspacetime}),  supported by some mathematical properties pointed out  above. 
The first consideration is that Postulate \ref{p2} should be regarded now as an approximate symmetry at each point, in a similar way as the affine structure of Postulate \ref{p1} has been regarded as an approximate symmetry to the structure of a relativistic spacetime\footnote{ Even though we focus on the relativistic case,  (disregarding the Leibnizian case and the other possibilities),  one could also consider a {\em Leibniz-Finsler structure} $(\Omega,h)$ on a manifold $M$, where $h$ would be now a Finsler metric on ${\rm Ker}(\Omega)$ instead of a Riemannian one,  according to  Table  \ref{Fig1}.}.  
This means that, now, one cannot find a set of coordinate charts  such that the relations \eqref{e_post2} occur at each $p$; however, one would expect that we will not be far from this situation (at least in regions of spacetime  free of extremely exotic or violent situations). Consistently, we will not have the sets  $\Sm_p$ of linear bases at  each $T_pM$ playing the role of (linear) IFR at $p$. However, one would expect that the set of observers $\So$ introduced in Def. \ref{d_So} will still make sense and will be ``close'' to the space of observers  for a relativistic spacetime. As the latter  is a hyperboloid (asymptotic to a quadratic cone) at each point $p$, now, $\So_p$ should be a strongly concave hypersurface asymptotic to some cone structure defining a  Lorentz-Minkowski norm at $p$ (see Rem. \ref{d_LMnormrem} (\ref{d_LMnorm_i3})) and, moreover, $\So$ should be identified as the indicatrix $\Sigma$ of a Lorentz-Finsler metric $L$.   

\begin{rem} \label{rem_items}
The previous discussion leads us to a Lorentz-Finsler metric $L$ with indicatrix $\Sigma$ equal to $\So$ which lies exactly under our Def. \ref{d_finslerspacetime}  (including also the improper case explained in Note \ref{note_improperLF}). 
  The way to arrive at this definition from the viewpoint of symmetries can be summarized as follows. 

 (1)  Following \cite{GW}, consider the connected parts of the identity $ISO_1(4)$, $SO_1(4)$,  $ISO(3)$, $SO(3)$ of the Poincar\'e,  Lorentz, Euclidean and orthogonal groups, resp. In Special Relativity, the  homogeneous spaces obtained as the  quotients $ISO_1(4)/SO_1(4)$, $ISO_1(4)/ISO(3)$, $ISO_1(4)/SO(3)$ 
are, respectively, the spacetime, the space of all the (rest) spaces (i.e., the space of all the spacelike hyperplanes,  being the standard rest space $ISO(3)/SO(3)$) and the space of observers $\So$ (being the space of velocities $SO_1(4)/SO(3)$). Here, $\So$ is  metrically identifiable with $\R^4\times H^3_+$. 

(2) In General Relativity, $\So$ is identified with the set $\Sigma^g$ of all the future-directed unit vectors.   $\Sigma^g$ is a  subbundle of $TM$ whose fibers are affine hyperboloids at each tangent space. Such hyperboloids characterize $g$  univocally  so that the information of $g$ is codified in $\So$. 

 (3) For the space of observers $\So$ in the Lorentz-Finsler setting,  $\Sigma^g$  is replaced with a hypersurface $\Sigma$ satisfying  formal properties analogous to $\Sigma^g$ (but dropping its pointwise symmetries)  so that it characterizes a Lorentz-Finsler metric.  
\end{rem}
Next, let us discuss more carefully the physical grounds of Def. \ref{d_finslerm}:
\ben\item The fact that $L$ is defined only on a cone domain $A$ and it is extended continuously to $\C$ comes from the nature of the space of observers. 

Recall that, then, one has   timelike geodesics (Rem. \ref{r_finslerspacetime} (\ref{i_2})) and, thus, freely falling observers. 
At least from a  trivial  mathematical viewpoint, this is enough to determine $L$  and, then, the fundamental tensor $g$ on the cone domain $A$. 

Notice that, given an observer  $v\in \Sigma_p$, the  tensor $g_v$ is then also obtained {\em on the directions}
 of $T_v\Sigma_p$.  In principle, $g_v$ (which can be obtained just from $\Sigma$) could be measured, as it comprises properties of neighboring observers.

\item The smooth extensibility of both $L$ and the fundamental tensor $g$  (as a non-degenerate one)  to the cone structure $\C$ appears as a natural approximation  (in principle, one would expect to remain close to the situation in a relativistic spacetime)  which mathematically  ensures that $\C$ is truly a cone (with $S_0$ in Def. \ref{d_scone} satisfying strong convexity).
Moreover, then $L$ also determines lightlike geodesics which, up to reparametrization, are inherent to the cone structure.  The improper case of Finsler spacetimes  satisfying the properties (i) and (ii) %(the latter assuming that curvatures should be smooth in a first approximation) 
 in  Note \ref{note_improperLF}   would also satisfy  all these properties about geodesics and cones. 

 Then, as a consequence,  the behaviour of lightlike geodesics becomes completely analogous to the classical relativistic case.  Indeed, Lorentz-Finsler metrics with the same cone structure are also related by an ``anisotropic conformal factor $\mu$'' (Rem. \ref{r_finslerspacetime}(\ref{i_3})) and the cone structure $\C$ also allows one to mimic the relativistic behaviour of Causality (Rem \ref{r_cone}(\ref{r_cone3})).

\item The physical  considerations in the two previous items are also assumed  in  standard  Relativity. 
 Namely,  observers are always  expected to measure only massive or massless particles, 
%including lightrays and any kind of signal), 
that is,  elements with velocities in a causal cone.
%  $\bar A=A\cup \C$. Therefore, {\em only the value of $L$ on $\bar A$ can be measured directly} even if $L$ can be extended outside $\bar A$ (recall Rem. \ref{r_finslerspacetime}(\ref{i_1})).   
In General Relativity, this is  apparent from the EPS formulation, where radar coordinates are systematically used with this aim (see the next  section). 
Certainly,   the metric tensor $g$ is assumed to be defined on all the directions in the relativistic case  
but the underlying reason  is that {\em $g$ is fully determined by its value along the causal  vectors}   (Rem.~\ref{r_minkowski}(\ref{r_minkowski5})).  This is not by any means true in the Lorentz-Finsler case, even if $L$ can be extended to the whole $TM$  (recall Rem. \ref{r_finslerspacetime}(\ref{i_1})).

\item When a spacelike separation in a direction $l$ is going to be measured  by an observer $v$,  it seems natural 
to  consider $g_v(l,l)$;  so, it would be irrelevant whether $L$ is not defined outside the cone.
 
Indeed,  from a purely geometric viewpoint, $T_v\Sigma_p$ would be naturally  regarded as the rest space of the observer $v$ at $p$, and $g_v$ would be the unique metric available there, even though the physical process to measure it might not be obvious.  
 It is worth pointing out  here  Ishikawa's 
claim in \cite{Ishi81} that $g_v(l,l)$ can be measured assuming that the physical light rays are 
those of $g_v$. Indeed,  this author   criticizes  Beem's definition of light 
rays,  who constructed them by using  the lightlike vectors on the cone $\C$.  Anyway, in our opinion, Ishikawa's claim needs further physical support. 
% for $v\in \bar A=A\cup \C$\footnote{\br No tengo claro que para Ishikawa el $v$ tenga que estar en el cono causal, ni que tenga que ver con el observador\er}  will be required in order to use a sort of radar coordinates (see
%\cite{Ishi81} around formulas (3.2), (3.3)) 
%and  \br his criticism about  the use of $g_l(l,l)$  from a physical viewpoint \er\footnote{\br En realidad, lo que critica Ishikawa es medir los rayos de luz $l$ con $g_l$ \er}.

%stresses that, from a physical viewpoint, $L$ and its fundamental tensor $g$ are required only for  $\bar A=A\cup \C$.  This author argues that   In principle, $v$ would be a timelike vector which represents an observere. However,  a kind of  would be used, and a lightlike $v$ will  required.

\item\label{item5}  It is worth emphasizing that no issue on smoothability occurs with $\Sigma$, which can be assumed  smooth  (as in Rem. \ref{r_minkowski}~\ref{r_minkowski_item2})  in most interesting cases. Indeed:

(a) The Lorentz-Finsler metric $L$ cannot be $C^2$-extended  to $0$, in agreement with the behavior of norms in both the positive definite case and the Lorentz-Finsler one (Rem. \ref{r_minkowski}(\ref{r_minkowski5})).  However,  no physical nor mathematical reason seems to require the smoothability of $L$  at 0 (compare with the EPS approach in \S \ref{5.2.1} below).

(b) Product metrics $-dt^2+F^2$ or, with more generality, the rough Lorentz-Finsler version of static spacetimes $-\Lambda(x) dt^2 +F^2(x,y)$, with  natural coordinates $(x,y)$ at $TM$,  are never smooth at $\partial_t$  whenever $F$ is  Finsler but  not Riemannian.  Consequently, some authors have included  the possible existence of non-smooth directions as a fundamental ingredient of Lorentz-Finsler metrics  (see for example \cite{CaStan16, CaStan18, LPH12}). Nevertheless, as explained   in 
 Rem. \ref{r_finslerspacetime}, parts (\ref{i_4}) and (\ref{i_5}), general smoothing procedures can be applied. What is more, a natural definition of (smooth) static spacetimes as well as  
  an  explicit procedure to construct locally all of them are available at \cite[\S 4.2]{JavSan18}.
 
 (c) Other issues of non-smoothness appear when modelling some specific physical situations (Very Special Relativity, birefringence) and will be considered in  \S \ref{s_6.1_modifiedSpecialRelativity}. 
\een

%\end{rem}

%\newpage

\section{Comparison with Ehlers-Pirani-Schild approach}\label{s5}

\subsection{Summary of the approach} \label{s5.1} EPS approach   \cite{EPS}  constructs step by step each geometric structure of physical spacetime (until reaching the metric) by means of physically motivated axioms:

\ben\item\label{EPS_1} Spacetime becomes a {\em differential manifold} $M$ endowed with a {\em cone structure} $\C$. 
Essentially, this is obtained by means of axioms on light propagation which involve messages and echoes between particles. 

Indeed, these axioms  allow one to find  {\em radar coordinates} with respect to {\em  (freely falling, massive)   particles},  the latter represented  by a class of unparametrized curves, which provide the structure of differentiable manifold, see EPS axioms $D_1$---$D_4$. Then, the cone structure $\C$ is obtained by using two axioms, $L_1, L_2$, on the local character of light propagation  around each  event $e$.  Indeed, $L_1$  states 
that given any   particle  P  with some parameter $t$  which passes through  $e$,  it follows that   any event  $p$ ($p\not\in$ P)  can be connected with the particle by exactly two light 
rays\footnote{\label{foor_EPS_DiffStr}Along the events $\tilde e\in$ P, all the light rays from $\tilde e$ would  trivially   cross  P at $\tilde e$; so, the function $g$ below would be trivially extended as $g(\tilde e)=-t(\tilde e)^2$.  However,  the points on $P$ would be excluded in order to define the differentiable structure of the manifold by using radar coordinates (recall the example in footnote 
\ref{foot_EPS_echo} below).},  while $L_2$ distinguishes two connected components for light rays.  Moreover, $L_1$ also states that, if these two rays cross the curve at the events $e_1,e_2$, then  $ g(p):=-t(e_1)t(e_2)$  is required to be  smooth  in a small neighborhood of $e$. 
 EPS claims that, then, $\C$ will come from the {\em conformal structure}  of some  {\em Lorentz metric}  (a particular case of our Def. \ref{d_cone}) and, so, we can speak  about {\em $\C$-timelike}  directions. 

\item\label{EPS_2} Spacetime is endowed with a {\em projective structure} $\mathcal{P}$. This is achieved by means of two axioms, $P_1$, $P_2$,  which model the free fall of  particles.  
 
The first one states only the existence of a  unique 
%(freely falling)
 particle, represented by means of an (unparametrized) curve, for each event $e$ and $\C$-timelike direction at $e$. The second axiom states that, around each event $e$, one can find coordinates  $\bar x^i$ such that  {\em any} 
  particle through $e$ admits a parametrization $\bar x(\bar u)$ satisfying:

\begin{equation}\label{e_law_of_inertia}
\left. \frac{d^2\bar x^i}{d\bar u^2}\right| _e=0.
\end{equation}
This equality is regarded as an {\em infinitesimal law of inertia} (consistently with Trautman \cite{Trautman}). By using \eqref{e_law_of_inertia}, EPS argues that a {\em projective structure},  which is claimed to be  compatible with some affine  connection $\mathcal{A}$, must appear. As a consequence,    not only the original particles would be recovered as pregeodesics   of  $\mathcal{A}$ but one would also obtain  {\em pregeodesics  at any  direction}, timelike or not.

\item\label{EPS_3} Spacetime is a {\em Weyl space} $(M, \C,\mathcal{A})$, where   $\mathcal{A}$ is an affine connection {\em compatible} with the cone structure $\C$, in the sense that the lightlike $\C$-pregeodesics are also $\mathcal{A}$-pregeodesics. 	This is carried out by means of their axiom $C$, which matches particles and light rays.

Specifically, this axiom assumes that, around each event $e$, any point in the $\C$-chronological future of $e$ lies on a particle through $e$. This 
will imply that the lightlike $\C$-pregeodesics  of the conformal structure  (namely,  the $\C$-cone geodesics, see Remark \ref{r_finslerspacetime}~\ref{i_2})   are also pregeodesics for the projective structure $\mathcal{P}$ in the step (\ref{EPS_2}). Then, EPS claims  that such a compatibility selects a  unique affine connection $\mathcal{A}$ compatible with the projective structure. 

\item\label{EPS_4} Spacetime is endowed with a {\em (time-oriented) Lorentzian metric $g$}, up to an overall  (constant)  scalar factor. This is obtained by means of a {\em Riemannian axiom}, which takes  into acccount that $\mathcal{A}$ has its own parallel transport and  its  curvature tensor; the axiom imposes the compatibility of (one of)  these  two elements with $g$.

Indeed, they state that 
the Riemannian compatibility of $(M, \C,\mathcal{A})$ is equivalent to 
  any of the following conditions: (a) the vectors obtained by  $\mathcal{A}$-parallel transport of  a single one $v$ at  $p\in M$ along two curves with the same endpoint $q$ have the same norm at $q$  (computed with  any of the homothetic scalar products compatible with $\C_q$), or (b) using Jacobi fields to construct arbitrarily close particles, the proper times of 
  two  of  such particles are linearly related at first order, that is, 
the regular ticking of a clock for the first particle  implies the regular ticking for the second one. %up to first order.
\een
 About these axioms and proofs, EPS admits: ``a fully rigorous formalization  has not yet been achieved''. Next, we will  focus just on the relation of EPS approach with Lorentz-Finsler metrics.  For progress on EPS approach,  see for example 
 %the concise
%review in 
\cite{Stachel}. 

\subsection{Keys of compatibility with Finslerian spacetimes} \label{5.2} The fact that a Finslerian spacetime  can fulfill  the EPS axioms  was already pointed out by
 Tavakol \& Van den Berg \cite{Tavakol}, %\footnote{\br yo diría que en \cite{Tavakol} no dice nada de la diferenciabilidad de $g(p)$\er} 
 who considered the case of Berwald spaces.  Now, our aim is to revisit precisely the compatibility of the four EPS steps with Finslerian elements,  as well as  \cite{Tavakol}. 
 
 \subsubsection{EPS step (\ref{EPS_1})}\label{5.2.1}
Recently,  Lamm\"erzhal and Perlick \cite{LP} have argued against the role of smoothness of the function $g(p)$ at $e$ in the step (\ref{EPS_1}). This differentiability becomes essential, because  the equalities $g(e)=0, g_{,a}(e)=0$  allow EPS to find a metric tensor  $g_{,ab}(e)$ compatible with $\C$. 

\smallskip

\noindent
Indeed,  there are subtle differences at this point in comparison with the introduction  of radar coordinates, which are used to settle  the smooth ($C^3$) manifold structure of the spacetime.
Certainly, EPS were  aware  of the 
existence of non-trivial subtleties,   as one can read at the beginning of their subsection {\em Differential Topology}: ``The reason that we do not take this structure [smooth manifold] for granted is that differentiability  plays a crucial role in our introduction of null cones  (...) and in the infinitesimal version of the law of free fall''.  The following three items must be taken into account in the EPS development: 

(i)  The axioms $D_1$---$D_4$, which allow one to define radar coordinates,   should  apply to particles P, Q which do not intersect. Otherwise, spurious differential issues might appear even in the case of Lorentz-Minkowski spacetime\footnote{\label{foot_EPS_echo} For  example, 
 let P be the $t$-axis and Q$=\{(t,x=t/2,y=0,z=0): t\in\R\}$. A message from Q to P would yield the map $t\mapsto t/2$ if $t\leq 0$ and $t\mapsto 3t/2$ if $t\geq 0$ (see the Example \ref{ejemplo_fundamental} below) which is not smooth at 0, in contradiction with $D_2$  (recall also 
 footnote~\ref{foor_EPS_DiffStr}).}.

(ii)  Axiom $L_1$, however, considers the functions $p\mapsto t(e_1), p\mapsto t(e_2)$ (which would be radar coordinates for some particle P through $e$) defined even on P. 
Moreover, this  axiom  ensures  that the particular combination $g(p) = -t(e_1)t(e_2)$ is $C^2$-differentiable  on P too.

(iii)  In  the discussion   above \cite[Lemma 1]{EPS}, they explain that $t(e_1)=t(e_2)=0$  occurs  if and only if $p=e=e_1=e_2$ (thus, $p\in$ P) and  they focus on this case.
Then, EPS argues first that the differential $g_{,a}(e)$ must be 0 by applying $L_2$ and, using $C^2$ differentiability, they show that the light directions must lie in the quadratic cone of the lightlike vectors of  $ g_{,ab}(e)$. 

\smallskip

\noindent  Recall, however, that there is {\em no physical justification} about why $g$ must be differentiable or $C^2$. Notice that $g$ is constructed from the functions $e\mapsto t(e_1)$ and $e\mapsto t(e_2)$, which are not smooth even in  the Lorentz-Minkowski spacetime  (see Example \ref{ejemplo_fundamental} below).  This assumption on the product $t(e_1)t(e_2)$ yields a posteriori the quadratic character of the cone, forbidding more general cone structures. 

 From a purely mathematical viewpoint, 
the smoothness issue  on the radar coordinates above would be similar to the differentiability  of the radial coordinate $r$ of a normed vector space at $0$:  $r$ is never smooth at 0 and  $r^2$ is smooth if and only if the norm comes from a Euclidean scalar product  (Rem.
\ref{r_minkowski}(\ref{r_minkowski5})). So, such an a priori assumption would be completely {\em unjustified from a mathematical viewpoint}  too,  indeed: 

\bit
\item[(a)] There are norms with an analytic indicatrix (thus, analytic away from 0) which do not come from a scalar product. For example,  on $\R^2$,  when the indicatrix is equal to the   curve in polar coordinates $\rho(\theta)= 1+ \epsilon \sin\theta$   for small $\epsilon>0$ (so that it is strongly convex). 
 \item[(b)] Euclidean scalar products are  very particular cases of analytic %Minkowski 
 norms.
\eit
That is, {\em the apparently mild EPS requirement of differentiability  at 0 becomes even stronger 
 than analyticity for a norm.} %which, in general, is a requirement   completely unreasonable for any physical field. %as it would imply that the knowledge of the field at a small region yields its knowlege on all the universe (in particular its vanishing around a point would imply to be zero everywhere). 

\begin{exe}\label{ejemplo_fundamental}
Let us see the role of smoothability for the EPS  function $g$ obtained by using a pair of radar coordinates with respect to a particle (according to EPS, one should take two pairs of radar coordinates by choosing two particles). We will work on $M= \R\times\R^3$.
 Let $t: \R\times \R^3\rightarrow \R$ be the natural projection, consider any Minkowski norm $F_0$ on $\R^3$ and take spherical-type coordinates  $(r,\theta, \varphi)$ on 
 %$\R^3\setminus \{0\}$ 
 $\R^3$ (up to suitable points) with $\theta, \varphi$,  the usual spherical angles and $r\equiv F_0$; then, extend the functions $r,\theta, \varphi$ to $\R\times \R^3$
 %(\R^3\setminus \{0\})$ 
 in a $t$-independent way.   Let $\C$ be  the natural (constant) cone structure   given by $t(p)=r(p)$ and regard the $t$-axis as a particle P. The corresponding radar coordinates are $t\pm r$ and thus, the EPS  $g$ 
 %in the EPS step \ref{EPS_1} 
 is $g(p)=-t^2(p)+r^2(p)$. 
 This function is smooth at 0 if and only if $F_0$ comes from a Euclidean  scalar product\footnote{Of course, one could introduce a spurious differential structure on $\R^4$ so that $r^2$ becomes smooth for a non-Euclidean $F_0$, but this would not be natural by any means. }. Anyway, the cone structure is smooth, because it is determined by  the  cone triple $(dt,\partial_t,F_0)$ and, so, it is compatible with a smooth  Lorentz-Finsler metric  $L$  (indeed, a Lorentz-Minkowski norm), see Rem.~\ref{r_finslerspacetime}(\ref{i_4}). As stressed in the item (\ref{item5}) below Rem.~\ref{rem_items}, the fact that $-dt^2+F_0^2$ is not smooth at    $\partial_t$ neither contradicts the existence of a smooth  $L$  nor introduces any issue of smoothability.
\end{exe}

 \subsubsection{EPS step (\ref{EPS_2})}\label{5.2.2} The way how EPS deduces the existence of the projective structure $\mathcal{P}$  from the infinitesimal law of inertia \eqref{e_law_of_inertia} consists in rewritting this last formula in arbitrary coordinates to obtain \cite[formula (7)]{EPS}
 \begin{equation}\label{e_law_of_Finsler_inertia}
 \ddot{x}^a + \Pi^a_{bc} \dot{x}^b \dot{x}^c=\lambda \dot x^a
 \end{equation} where $\lambda$ depends on the parameterization $x^a(u)$  of the curve and $\Pi^a_{bc}$ depend on $x^a$. These functions are called the {\em projective coefficients}, as they would determine a projective structure $\mathcal{P}$ compatible with some affine connection.

%\smallskip
%\noindent
  %\footnote{As a simplification, along the discussion one can consider $\lambda\equiv 0$ (i.e.,  parametrized curves) so that $\Pi^a_{bc}(x^i)$ defines an affine connection. This does not affect any conclusion.}.  
 However,  if one allowed the functions  $\Pi^a_{bc}$ to depend on the  direction of the  velocities $\dot x^j$, then   $\Pi^a_{bc}(x^i,\dot x^j)$  could represent the formal Christoffel symbols for a Lorentz-Finsler metric $L$  (indeed, for its $A$-anisotropic connection, see Remark \ref{r_finslerspacetime}, item \ref{i_6}).  Thus, the solutions of \eqref{e_law_of_Finsler_inertia} would be pregeodesics for $L$ {\em which satisfy the law of inertia \eqref{e_law_of_inertia},}  up to the following issue of  $C^2$-differentiability of  the chart coordinates  at  the origin.

%\smallskip
%\noindent
 The existence of {\em normal coordinates in $\C$-timelike directions} (which is ensured for   any $A$-anisotropic  connection\footnote{In principle, the normal coordinates can be defined when the anisotropic connection is defined for all the vectors in $TM\setminus 0$, 
but it is always possible to extend the $\bar A$-anisotropic connection to all directions locally (see \cite[Remark 6.3]{JavSan18}, where the Lorentz-Finsler case is considered in detail). These coordinates are obtained using the exponential map in a neighborhood as in \cite[Lemma 6.2]{JavSan18}.}) would be the natural mathematical translation for  the law of inertia. However, 
 the Christoffel symbols 
 of a Lorentz-Finsler metric   
 might not be even continuous
 at the origin  by the trivial reason that these symbols may depend on the direction but they cannot vary along each direction (they are homogeneous of degree 0).  Thus, its exponential map is not guaranteed to be $C^2$ at the origin unless 
 the anisotropic connection is    affine (i.e., it  does not depend on the direction). 
 It is known that, for a positive definite Finsler metric, this happens if and only if the metric is of Berwald type\footnote{This means that its Chern-Rund connection defines an affine connection on the
underlying manifold, see \cite{SLK} for quite a few of characterizations.}   \cite{AZ} (see also
\cite[Ex. 5.3.5]{BCS}) and this can be extended to the Lorentz-Finsler case. 
%\footnote{\br 
%\sout{Additionally, Robert Bryant describes when the exponential map of a Finsler surface becomes smooth in} \er 
%\\ 
% \br \sout{https://mathoverflow.net/questions/230315/no-normal-coordinates-on-general-finsler-manifolds .  } \er } 
 Indeed,  this type of metrics provides  the Lorentz-Finsler examples  beyond EPS suggested in the literature,
see  \S \ref{s_5.2.5}.

 Summing up, we emphasize: (a) the coordinates provided by the exponential map  of a Lorentz-Finsler metric at any event $e$  are smooth along the half-lines starting at $e$ and they satisfy \eqref{e_law_of_inertia}, and (b) to exclude anisotropic connections because of their  lack of smoothness at 0  is a subtle mathematical issue and (as in the discussion of the Step \ref{EPS_1} in \S \ref{5.2.1}) this is not justified in EPS neither physically nor mathematically. 
Thus,  the law of inertia should be regarded as compatible with Lorentz-Finsler metrics according to our definition (where the directions outside the causal cone are not taken into account), including even the improper case in Note \ref{note_improperLF}).

 \subsubsection{EPS step (\ref{EPS_3})}\label{s.5.2.3} 
 The compatibility of $(\C, \mathcal{P})$ as a Weyl space with a (unique) affine connection $\mathcal{A}$ obtained by using EPS axiom C   becomes  a subtle question. %The uniqueness of  $\mathcal{A}$ would be straightforward, as its Christoffel symbols at each point would be determined  for the its value of $\mathcal{A}$ on lightlike vector fields. 
On the one hand, Trautman \cite{TrautmanGoldie} claimed the necessity of a detailed proofs in his review on the reprinted EPS article and, shortly after, this author and V. Matveev \cite{MatTrau14} characterized when a pair  $(\C, \mathcal{P})$ is compatible. 
On the other hand, the notion of Weyl space as the triple $(M,\C,\mathcal{A})$ given by EPS  does not coincide with  the standard one of Weyl geometry\footnote{In modern language, a Weyl geometry on $M$ is a conformal structure $\C$ endowed with a connection on the $\R^+$-principle bundle $P\rightarrow \C$, where the fiber of $P$ at each $\C_p$ is the  class of homothetic Lorentzian scalar products compatible with $\C_p$ (see for example \cite{Folland}); 
such a notion was considered in references on EPS as \cite{FF}.}.  
Some authors questioned whether such an EPS structure permits to define a standard Weyl one as well as EPS development at this step. However, very recently, this question has been  positively answered by Matveev and Scholtz \cite{MatSch20}, vindicating the EPS approach. %\footnote{\br Esto lo dicen al final de su artículo, aunque no me quede claro si, por ejemplo, el uso del axioma C realmente produce todo lo que se quiere.\er }

%In his review on the EPS article,
%Trautman \cite{TrautmanGoldie}   
%points out that, assuming the  existence of $\mathcal{A}$,
 % its uniqueness  follows easily. 
%Nevertheless, 
% he writes about EPS: 
% ``the paper does not contain a formal proof of the existence, in a Weyl
%space, of the affine structure $\mathcal{A}$'' %with the 
% properties above 
%\cite[p. 1584]{TrautmanGoldie}. 
%What is more, 

% The problem of the existence of the unique compatible affine connection has been solved in both cases only very recently. More specifically, when the Weyl space is the one given by EPS, the proof was provided in \cite{MatTrau14}, and in the more general case of a standard Weyl geometry, in \cite{MatSch20}.   
 
We emphasize that  the EPS compatibility axiom C can be stated with no modification in the case that $\C$ is any cone structure and $\mathcal{P}$ is the projective class of pregeodesics of any $\bar A$-anisotropic connection defined on all the $\C$-causal directions (as already commented, $\C$ determines intrinsically cone geodesics extending those in EPS conformal cones, Remark \ref{r_cone}(\ref{r_cone3})).  
%Moreover, EPS  definition of  Weyl space is extended directly to any triple 
%$(M,\C,\mathcal{A})$ where  $\mathcal{A}$ is an $\bar A$-anisotropic connection   whose pregeodesics include the cone geodesics of the cone structure $\C$. 
 So,  the  possibility to extend previous results to this setting  should be explored. 

% Summing up,  the question of compatibility would
%deserve a further mathematical study in both, the EPS case and the generalized one above.

 %be  %Moreover, we know that $\C$ is always  compatible with a class of anisotropically conformal Finsler metric (cite??). However, now we cannot select a single one (commenet).

 \subsubsection{EPS step (\ref{EPS_4})}\label{s_5.2.4}
In the EPS spirit, the {\em Riemann axiom} would be any (minimum, physically well-motivated)  assumption  making a compatible triple $(\C, \mathcal{P}, \mathcal{A})$ also  compatible with a Lorentzian metric, as the conditions labelled (a) and (b) at  step (\ref{EPS_4}). However, in orden to state now a {\em Finslerian axiom}, 
one should notice that   these conditions  involve $\mathcal{A}$ and, so,  they might depend on the way how the previous step is solved. 

 Anyway, it is worth pointing out some reasons which would support the convenience of such a Finslerian axiom. On the  mathematical side, the results collected in Rem. \ref{r_finslerspacetime} (parts  (\ref{i_3}) and (\ref{i_4})) show a natural consistency: (i) any $\C$ can be associated with a Lorentz-Finsler metric $L$, (ii) any other associated $L'$ is anisotropically related to $L$, and (iii) the lightlike pregeodesics of all the associated Lorentz-Finsler metrics agree with the cone geodesics of $\C$.  
 On the physical side, the standard chronometric approach  is  reduced to the determination of the indicatrix of the observers at each event and this would depend only on the behaviour of clocks  and measurements of proper time\footnote{Compare with EPS claim (\ref{EPSclaim_1}) in \S \ref{s.5.3} below.}.  Notice that, in the Finslerian case, this behaviour would not be restricted  by any condition of quadratic compatibility (but only by a mild overall  concaveness and asymptoticity to~$\C$).

\subsubsection{Finslerian examples strictly compatible with EPS}\label{s_5.2.5}
As we have explained, the requirement of  $C^2$ smoothability at 0 for cones and geodesics is the main gap in the EPS approach. However, Tavakol and van der Berg \cite{Tavakol} showed Finslerian examples which are  even compatible with  this requirement. Next, let us analyze these and other possible examples of  Finsler EPS compatible (FEPS) spacetimes.

A very simple FEPS example would be the following. Consider
an affine space
endowed with any
Lorentz norm $L_0
$ with the same
cone as a
Lorentzian scalar
product $\langle
\cdot,\cdot\rangle
$ ($L_0$ can be obtained by perturbing the indicatrix of $\langle
\cdot,\cdot\rangle$, as explained in Remark \ref{r_finslerspacetime}, item (\ref{i_4})). Then, the cone
and geodesics of
$L_0$ would satisfy
all the EPS axioms, including those of $C^2$ smoothness at 0. Here, the key is that the affine parallel transport preserves both, the indicatrix of $L_0$ and $\langle
\cdot,\cdot\rangle$.

\begin{rem}  Tavakol \& van
der Berg examples also obey this pattern, even though they are  more refined and interesting. Indeed, they are Berwald type spacetimes constructed by using an auxiliary Lorentz metric $g$. The fact that  they are  FEPS examples becomes apparent, because they have the same cone and geodesics as $g$. 
\end{rem}

However, we emphasize that these FEPS examples are {\em not} in contradiction with the EPS conclusions. Indeed, the above examples only show that the physical elements $\mathcal{C}$, $\mathcal{A}$, 
under the EPS restrictions, may be compatible with two different geometric structures, the Lorentz $g$ and  Lorentz-Finsler $L$ metrics. To decide which of them would be physically more appropriate would depend on further physical input. In  absence of such input,  the use of $g$ would be mathematically simpler. Nevertheless, this input might appear from the measurements of proper time, as suggested at the end of \S \ref{s_5.2.4}.   

In order to obtain a true Finslerian contradiction with EPS conclusions, one should construct a Lorentz-Finsler metric $L$ with associated cone  $\C$ and anisotropic connection $\mathcal{A}$ satisfying: 

(i) the EPS $C^2$ requirements, 

(ii) the cone geodesics of $\C$ are pregeodesics of $\mathcal{A}$,  

(iii) $\C$ is invariant under the  $\mathcal{A}$-parallel transport,  and 

(iv) $\mathcal{A}$  %(or its symmetrized connection with no torsion and the same geodesics)  
is not compatible with any Lorentz metric. 

\smallskip

However, the following known results on linear algebra and Finsler metrics suggest  the difficulty to find such a contradiction.  Notice that the Finslerian results have been obtained in the positive definite case (the last one after the original EPS paper)  and their suitable extensions to the Lorentz-Finsler case is not always clear:

(a) {\em The square of a norm is $C^2$ at 0 if and only if it comes from an Euclidean scalar product} (\cite{Warner65},   \S \ref{s_phys_intuitions}, item \ref{item5} (b)). As a consequence,  the   $C^2$ requirement (i) implies the Lorentzian character of the cones, \S \ref{5.2.1}.

(b) {\em A linear map between two Lorentzian vector spaces is homothetic if and only if it preserves the lightcones}
\footnote{See for example \cite[\S 2.3]{BEE}.}. As a consequence, if $\C$ is compatible with a Lorentzian metric $g$ (as established in (a)),  the preservation of  $\C$ under   $\mathcal{A}$-transport  in (iii) implies that this transport must be a $g$-homothety; in particular, the Riemann axiom (its version (a) in \S \ref{s5.1}, item \ref{EPS_4}) is satisfied.

(c) {\em The exponential of a Finsler metric is smooth at 0 if and only if it is Berwald}  \cite{AZ}.  As a consequence,  the law of the inertia (with the $C^2$ requirement (i)) would imply that only Berwald-type Lorentz-Finsler metrics could be admitted, \S \ref{5.2.2}).

(d) {\em All Finsler metrics of
Berwald type metric  are affinely equivalent to a Riemann space}, that is, their affine connections
are Levi-Civita for  Riemannian metrics (Szab\'o, \cite{Sz}).

\smallskip

Notice that, in the case that a suitable Lorentz-Finsler version of this last result  existed (taking into account, eventually, the requirement (ii)), this would imply that FEPS is also compatible with a Lorentz metric, that is, the requirement (iv) could not be fulfilled if (i), (ii), (iii) held.

\begin{rem}  \label{r_Fuster0}
 Recently, Fuster et al \cite{FHPV} have shown that there are Berwald-type Finsler spacetimes which are not affinely equivalent to a Lorentz metric. However, they contain non smooth directions;  this must be taken into account for the comparison with Szab\'o's result or the possible contradiction with EPS.   Anyway, they show a minimal violation of smoothness.  Indeed, their examples include  improper Lorentz-Finsler metrics $L$, satisfying both (i) and (ii) in Note \ref{note_improperLF} and,  moreover, they   satisfy that some power  $L^r$ (with $r>1$ and integer) is smooth even at the lightlike directions of their cone, see Remark \ref{r_Fuster1}. 
\end{rem}

\begin{rem} Recently, Hohmann et al \cite{HPV1} have  classified the Berwald
spacetimes which are spatially homogeneous and isotropic. Among them, they have found a genuinely Finslerian class (with cones equal to classical FLWR spacetimes).
As  a proper  Finslerian extension of relativistic  cosmological spacetimes, 
the interest of this FEPS class is remarkable (even if it is not clear that they yield a true contradiction with EPS or not). 
\end{rem}
 
\subsection{Constructive EPS approach vs observer's approach}\label{s.5.3}
In order to compare EPS approach and ours, notice first that EPS distinguishes between a {\em chronometric} approach \`a la Synge \cite{Synge} and their {\em constructive} approach. The former one regards  the concepts of particle
and standard clock as basic, and introduces  the metric $g$ as fundamental. So, it regards as  primitive an easily measurable  physical quantity (proper time) and a single geometric structure (the metric), the latter encoding all the other geometric elements in a simple way. As a consequence of these  advantages, the chronometrical approach is  very economical. However, EPS also pointed out drawbacks such as:
\begin{enumerate}
\item %[(i)]
 \label{EPSclaim_1} the impossibility to construct the metric from the behavior of the clocks alone, \item %(ii) 
 the inclusion by hand of the hypothesis that  metric geodesics will correspond with free motion and, then, 
 \item %(iii) 
 the expectation that the clocks constructed by means of freely falling particles and light rays will agree with the metric clocks. 
\end{enumerate} This motivated  their constructive approach starting at basic elements  (events, particles, light rays) and axioms close to the physical experience. Certainly, EPS aimed to deduce the metric structure from their axioms.  However, the difficulties found in some points (as explained in \S \ref{s.5.2.3}, the step  (\ref{EPS_3}) would have been solved only very recently) as well as the necessity to introduce a Riemannian axiom at the end, makes the procedure somewhat awkward. 

In contrast, our approach is neither chronometric nor constructive; instead, it only appeals to the way how we  measure. As such a procedure is  complex, one starts at the ideal situation when some symmetries among measurements are assumed   (our two postulates).  Under our viewpoint if such symmetries did not hold at all, it would  not be clear even the meaning of the verb ``to measure''. However, in the case that the symmetries can be invoked as an approximation, the meaning of measurements can be recovered.
 Then, the emergence of some geometric structures resembles a sort of experimental Klein's Erlangen program. 

Notice that only hypotheses on  the way of taking coordinates of space and time (inertial reference frames, observers) were assumed.
 It is noteworthy that only some few possibilities emerged for the geometry of spacetime when these symmetries hold in a strict way.  From the standard physical viewpoint  (close to philosophical realism), 
the fact that the   space, time and matter allow us to measure in  some specific way should be interpreted as an evidence about the power  of the emerged geometric structures in order to describe the physical spacetime.

Anyway, it is also worth noticing that our final geometric model of spacetime (a manifold endowed with a Lorentz-Finsler metric defined only on the set $\bar A$ of causal vectors for a cone structure) is compatible with EPS approach. Indeed,  as shown in the previous subsection, EPS excluded the properly Finslerian case  only due to two mathematical subtleties  about  unjustified restrictions of smoothness in radar coordinates (step (\ref{EPS_1}))  and   the  law of inertia   (step (\ref{EPS_2})). 
    As pointed out in our discussion at  \S \ref{s_5.2.4},  in the case that $\C$ (or the Weyl pair $(\C, \mathcal{P})$ in the step (\ref{EPS_4})) were not assumed to be compatible with a Lorentz metric, the Riemannian axiom might be replaced by   a Finslerian one which would involve only the behaviour of clocks. 
  
Finally, we  emphasize that 
EPS approach also gives a strong support to our hypothesis that, in principle, the Lorentz-Finsler metric must be defined only at the causal directions in $\bar A$: no basic element  in the EPS approach (particle, light rays, radar coordinates,  echoes) involves non-causal directions. 

\section{Lorentz symmetry breaking}\label{s6}

The implications of the introduction of Finslerian geometry may be more transparent if we focus on  the  Lorentz symmetry breaking
which occurs
when  Lorenz-Finsler norms are used to extend   Special Relativity  (i.e., when one considers only the second non-linearization  in Table~\ref{tab1}).  We will center around this breaking from our theoretical viewpoint; for a more experimental one, a review  on tests of Lorentz invariance  (which includes Lorentz-Finsler possibilities and discussions on von Ignatowski approach) was updated in 2013 by Liberati \cite{Liberati}. 

\subsection{Modified Special Relativity }\label{s_6.1_modifiedSpecialRelativity}
Assume that the  spacetime has a structure of affine $n$-space Aff and it is endowed with  a  Lorentz-Minkowski norm  $L_0$  rather than a Lorentz scalar product $\langle\cdot,\cdot\rangle_1$.  
Roughly speaking, this is a generalization of Special Relativity where, instead of dropping Postulate~\ref{p1} (as in General Relativity), we are dropping Postulate \ref{p2}. Thus, one has affine reference frames but no IFR's;  however,  one can still assume that any physically relevant vector basis $B$  will be composed of a timelike vector with respect to the cone $\C_0$ associated with $L_0$ and three  non-causal  ones  spanning a spacelike hyperplane $\Pi$ ($\Pi\cap \C_0=\emptyset$). 
\begin{rem}\label{r_6.1} There is a  mathematical analogy between the  transition from $\langle\cdot,\cdot\rangle_1$ to $L_0$ and  the one from Special to General Relativity. The latter   goes from the point-independent $\langle\cdot,\cdot\rangle_1$  to a  Lorentz metric $g_p$ which depends on the point $p$ in an $n$-manifold $M$. In the  former transition %to a Lorentz-Minkowski norm $L_0$, 
the vector space $V$ associated with  Aff is  endowed with a %cone $\C_0$ and 
 Lorentzian metric $g_v$ which depends on the direction of   $v\in \bar A_0$ for some cone  structure  $\C_0$. 
What is more, the independence of $g_v$ with the radial direction ($g_v=g_{\lambda v}$ for $\lambda>0$) makes  relevant only the variation of $v$ on a topological $(n-1)$-spherical cap. 
\end{rem}

\subsubsection{VSR and GVSR  }
The transition from $\langle\cdot,\cdot\rangle_1$ to $L_0$ appears naturally in the so-called {\em Very Special Relativity}  (VSR). This was introduced by Cohen and Glashow \cite{CG} who realized that most physical theories (including   those satisfying the charge-parity symmetry) which are invariant under  certain  proper subgroups of the Poincar\'e group  have  the symmetries of Special Relativity.  Thus, the cases when VSR does not imply Special Relativity appear as a convenient arena to test violations of Lorentz invariance. 
Remarkably, Bogoslovsky \cite{Bogoslovsky} had already studied 
the most general transformations which preserve the massless wave equation
and he found the invariant metric:
\begin{equation}
\label{e_Bog}
L_{\hbox{{\tiny Bog}}}=\langle \cdot,\cdot \rangle_1^{(1-b)}(\beta\otimes\beta)^{b} ,
\end{equation}
where $\beta$ is a $\langle \cdot,\cdot \rangle_1$-lightlike dual vector and $0\leq b<1$, a constant\footnote{$\beta$  would correspond with the direction of propagation of the wave, $b$ with a parameter for a conformal transformation of $\langle \cdot,\cdot \rangle_1$ which preserves the wave equation and $L_{\hbox{{\tiny Bog}}}$ with a Finsler metric invariant by this transformation, see also \cite{Bog18} for further information.}.

\begin{rem}\label{rem_bog} 
 (1)  When $L_{\hbox{{\tiny Bog}}}$ is 
restricted to the future causal cone $\C_0$ of 
$\langle \cdot,\cdot \rangle_1$, then it becomes a Lorentz-Minkowski norm, up to the requirement of 
differentiability at the lightlike vectors,  that is, $L_{\hbox{{\tiny Bog}}}$ is an {\em improper} Lorentz-Minkowski norm according to Note \ref{note_improperLF}. Indeed, $L_{\hbox{{\tiny Bog}}}$ is not smooth at  $\C_0$, 
but it trivially satisfies  the properties (i) and (ii)  of that  note   as, in this case, the $A$-anisotropic  Chern  connection of  $L_{\hbox{{\tiny Bog}}}$  is the affine connection of the Euclidean space.  
%  (as $\C_0$ is the usual one in Lorentz-Minkowski) and any Finslerian curvature is naturally defined as 0 on $\C_0$ (this happens for any norm). \footnote{\br Comprobar si es cierto  lo que viene (tal vez sea fácil hacer un cálculo rápido de la conexión anisotrópica y ver qué pasa en el caso $b$ irracional) y su relación con Pfeifer \er }  In principle, however, the condition (ii) of the Note is not satisfied because the anisotropic connection cannot be extended smoothly to $\C_0$
%\footnote{\br Sobre esto tengo mis dudas. \er Supongo que sería extensible \br to the  lightlike direction $\beta^\sharp$ \er (porque la métrica sería diferenciable allí) pero no 
%para las otras direcciones luminosas. Sin embargo, como esto es una norma,  las geodésicas son las rectas, y la conexión debiera de poder extenderse siempre (de ser cierto, eso debiera de comentarse entonces para las normas de Lorentz-Minkowski impropias en la misma Nota \ref{note_improperLF}). Tal vez parte de mi confusión provenga de que, en lo escrito en la versión previa, podría haber veces en que $\beta$ es un covector y otras en que es una forma diferencial. (En este caso, lo que he dicho arriba se podría incluir tras poner el texto de la siguiente nota 
%}
%,  in general	 (see below for the case $b$ rational). \er

 (2) Recall that the restriction of $L_{\hbox{{\tiny Bog}}}$ to  the causal $\C_0$-vectors   is  natural not only  because of  
the physical reasons discussed in the previous sections, but also because the vectors where $L_{\hbox{{\tiny Bog}}}$ vanishes  include the 
 $\langle \cdot,\cdot \rangle_1$-spacelike ones in the kernel of $\beta$, and these vectors do not seem to admit any natural interpretation as directions of light rays.
 
\end{rem}

 As a generalization of VSR for curved spaces, {\em General Very Special Relativity (GVSR)}  drops the invariance of VSR by translations. This was introduced by  Gibbons et al. \cite{GGP07}, who  
  pointed out the Finslerian character of  GVSR.  Relevant examples of Lorentz-Finsler metrics in VSR and GVSR have been recently found, see  \cite{FP16, FPP18} and references therein. % \footnote{ Tal vez según quede la siguiente subsección convendría añadir: \br Anyway, a simple example considered by Pfeifer and Wohlfarth?? \cite[??]{} is the metric \eqref{e_Bog} where $\beta$ is now a differential form which is either lightlike or timelike for $\langle \cdot,\cdot \rangle_1$. Recall that the points where $\beta$ is timelike the factor $\beta(v)^2$ vanishes only when $v$ is $\langle \cdot,\cdot \rangle_1$-spacelike and no issue of differetiability appears on $\beta^\sharp$. \er }

\begin{rem}\label{r_Fuster1}
A natural generalization of Bogoslovski metric to GVSR is obtained by regarding $\langle \cdot,\cdot \rangle_1$ and $\beta$ as a Lorentz metric and arbitrary 1-form on a manifold $M$. Fuster et al. \cite{FHPV} even consider the generalization obtained by multiplying the latter by a  homogeneous factor type $(c+m \beta^2/\langle \cdot,\cdot \rangle_1)^p$, where $c,m,p\in\R$.
  Among this type of metrics, they found the Berwald spacetimes non-affinely equivalent to a Lorentz one  cited in Remark \ref{r_Fuster0}.
\end{rem}
 
\subsubsection{Smoothability  at  the cone and birefringence}\label{r_Bog}  By starting at our previous study of Bogoslovsky metric, we can go further in the issue of the differentiability of the Lorentz-Finsler metrics at the cone, by comparing our approach with the one introduced by Pfeifer and Wohlfart  (PW) \cite[\S A]{PW11}, which has been modified sometimes \cite{HPV, HPV1}. 

These authors  considered a definition of Lorentz-Finsler 
spacetime and metric  which permits degenerate  directions. %\footnote{\br , noteworthy lightlike onesESTO QUIERES DEJARLO EN EL FOOTNOTE?\er}. 
This  definition  becomes consistent with our  notion  of 
improper Lorentz-Finsler metric in Note~\ref{note_improperLF} and the conditions (i) and (ii) therein. 
Essentially, PW considers, instead of a Lorentz-Finsler metric $L$ as above,   a function $L_r$ which is   $r$-homogeneous for some  $r\geq 
2$, and they relax the non-degeneracy of the fundamental tensor $g$ allowing a set of zero-measure where it degenerates.  Remarkably, 
%In particular, 
the smoothness of $L_r$ does not imply the smoothness of the two-homogeneous function $L=L_r^{2/r}$  along the cone $\C$.  Nevertheless,  the $A$-anisotropic connection (which is well defined on a dense set of timelike vectors) can be then extended to the lightlike ones  (see \cite[Th. 2]{PW11}).  %\footnote{Habría que citar el sitio en que PW demuestren eso. Si no lo demuestran, tal vez habría que decir que, en principio, eso sería posible, y que ocurre así para el caso de Bogolovski con $\beta$ causal.}
 In this case, $L=L_r^{2/r}$ lies under our definition of improper Lorentz-Finsler metric with a connection extendible to $\C$.

 However, for most choices of  $b$, Bogoslovsky metric 
\eqref{e_Bog} is an example which does not lie under 
 PW definition, in spite of having a regular cone and a connection extendible to it (indeed, both of them the same as Lorentz-Minkowski spacetime).   Nevertheless, they  remain under the definition in the variants \cite{HPV,HPV1} and they are always improper Lorentz-Finsler in the sense of Note~\ref{note_improperLF}, which seems to  provide a suitable geometric framework for these cases. 
Indeed,  let us analyze a generalization of Bogoslovsky metrics from norms to arbitrary manifolds considered  in \cite{FPP18}. 
Let $L_{\hbox{{\tiny Bog}}}=g(\cdot,\cdot)^{(1-b)}(\beta\otimes\beta)^b$, where $g$ is a  (time-oriented)  Lorentzian metric and $\beta$ a  1-form  in a manifold $M$;  notice that, 
whenever $\beta$ remains $g$-causal, the  future cone $\C$ of $g$ agrees with the lightlike vectors for $L_{\hbox{{\tiny Bog}}}$ and this metric is well-defined on all the $g$-causal vectors.   
%\footnote{\br Creo que debiéramos de restringirnos a este caso desde el principio, y a los vectores causales (al final se podría decir algo del caso espacial si fuera interesante). En fin,  sino, 
%avisar por lo menos de qué pasa cuando, por ejemplo $\beta(v)<0$ y tomamos su potencia $m$-ésima... \er}
Let  $r=1/(1-b)$   and  $L^r_{\hbox{{\tiny Bog}}}=g(\cdot,\cdot)(\beta\otimes\beta)^{m}$, with  $m=b/(1-b)$.  Then, 
\begin{multline*} 
g^{L^r_{\hbox{{\tiny Bog}}}}_v(u,w)=\beta(v)^{m}g(u,w)
+m\beta(v)^{m-1}(  g(v,u)\beta(w)+  g(v,w)\beta(u))\\
+  \frac 12  m (m-1)g(v,v)\beta(v)^{m-2}\beta(u)\beta(w).
\end{multline*}
It is not difficult to see that $g^{L^r_{\hbox{{\tiny Bog}}}}$ has the same signature as $g$ when  $\beta(v)>0$ % \footnote{\br si nos hemos restingido al cono futuro $\beta(v)>0$ y nos olvidamos de problemas con raíces \er}
(use for example the criterion in \cite[Prop. 4.10]{JavSan18}), but it is trivially equal to zero, when  
%\footnote{\br si $u,v,w$ son $\langle \cdot,\cdot \rangle_1$-luminosos independientes con $\beta(v)=0$, no veo por qué se anulan los términos en $m-1$ (pero siempre se puede tomar $b>1/2$ para que $m>1$)\er } 
$\beta(v)=g(v,v)=0$ and $1/2<b<1$  (observe that in such a case, ${ m } >1$).  
As a consequence, if  $\beta$ is always $g$-timelike,  the generalized Bogoslovsky metric is always a Finsler spacetime  according to PW definition.  When  $\beta$ is $g$-lightlike,   there will be lightlike directions of $L_{\hbox{{\tiny Bog}}}$ which do not satisfy the conditions of PW,  no matter if the  connection is extendible to the (regular, Lorentzian) cone $\C$ or not; however, they will be improper Lorentz-Finsler metrics and satisfy also the definitions in \cite{HPV, HPV1}. 
 An issue beyond the lack of smoothness is birefringence. 
This phenomenon occurs in some crystals and it is described by using two cones, each one with a Lorentz or Lorentz-Finsler metric. It  is related with the dispersion of the light with different wavelengths in the crystal. Some authors have pointed out the  possibility that these dispersions occur also as a constitutive element of the spacetime   \cite{LH, Pfeifer}.  

 One way to describe the lightrays when there is birefringence is by using the product of two Lorentz metrics $L=\sqrt{L_1 L_2}$.  Essentially,  the lightrays are described then by the lightlike geodesics of this product; indeed, when one of the metrics  $L_1$ vanishes and the other does not, then  a metric  anisotropically conformal to $L_1$  is obtained.
 %\footnote{\br M.A. No estoy seguro de esta afirmación. Creo que las cone geodesics están bien definidas, pero no sé si coinciden con las otras. M. He cambiado un poco eso pero mira la siguiente nota a pie. \er }.  
  However, some additional subtleties appear. For example,  when the  lightcones $\C_1, \C_2$ of the metrics are one inside the other, say $\C_1< \C_2$, this product   is an 
improper Lorentz-Finsler spacetime
 %\br could lie under our definition of Lorentz-Finsler metric only \er  
 on the domain  $\bar A_1$ determined by the interior cone $\C_1$ (see \cite[Appendix A.5]{JavSan18}). 
 Notice, however, that the situation would be more complex when the position of the cones is arbitrary. Assuming that the intersection $A_1\cap A_2$ is non-empty  at every $p\in M$, then each $(A_1)_p\cap (A_2)_p$ is convex. However, its boundary may have non-smooth directions and $L$ would become an improper Lorentz-Finsler metric. 
%\br   An additional issue occurs if the Lorentz-Finsler metrics are defined on all $TM$, as in \cite{LH, Pfeifer},  because of the situation is highly uncontrolled outside $ \bar A_1 \cap \bar A_2$. \er\footnote{\br Esto yo lo quitaria, pues a ellos les da igual que sea degenerada allí fuera en algunos puntos \er}

 Under our viewpoint, the existence of 
different light cones may be a worthy possibility (see the discussions around Def. \ref{d_speedoflight}). However, in principle, our mathematical framework would consider separately the cones. Indeed, 
 a possible way  to describe  phenomenons related to the dispersion of light would be to introduce a space  $\bar M= M\times \R^+$ with an extra dimension   representing  the refractive index $\n$.  Then, a Lorentz-Finsler metric $L_{\n}$ would appear for each  $\n$ and the different cone structures $\C_{\n}$ on $TM\times\{{\n}\}$  would project on $TM$.     The birefringent model would
 correspond with an effective description of polarization by using two refractive index, that is, 
 the projection on $M$ of  a limit case on $\bar M$ where only two values of $\n$  would become relevant.  

\subsection{Anisotropic speed of light}\label{s6.3}

In subsection \ref{s3varying_c}, the possibility of a pointwise variation of $c$ was discussed for Lorentz metrics.
As explained  there,   an additional element to the metric structure (such as a pointwise measurement of the fine structure constant $\alpha$) was  germane. 
 %Apart from this possibility, 
 Next we will consider some   different possibilities for the measurement of a varying  speed of light (VSL)   proper of the Lorentz-Finsler case. 
 
% \subsubsection{Properly Finslerian variations of speed of light} 
 %Under our viewpoint, this may happen in three main different senses. For the first one, recall that the main
 
 The underlying reason of the difficulty to measure a VSL in the Lorentzian case relied on the fact that the Levi-Civita parallel transport is a conformal transformation (indeed, an isometry), thus, mapping always affinely lightlike cones into lightlike cones. A first possibility in the Finslerian case is:
 
 \bit
\item[(VSL1)]\label{i_luz1}  Lightlike cones at different  points  may be non-affinely equivalent\footnote{ From a mathematical viewpoint, the property that  lightcones are affinely diffeomorphic is a Berwald-type property. Recall that one of the  characterizations of Berwald manifolds in the class of the Finsler ones is the existence of a torsion free derivative operator such that the parallel translations with respect to it preserve the Finsler norms of tangent vectors \cite[Prop. 6]{SLK}; in particular, the norms at different points are isometric.  }. 
\eit
Clearly, this should be an indicator of the existence of different speeds of light at different points.
Anyway, at the end such a possibility would be possible because a Lorentz-Finsler metric $L$ provides a breaking of Lorentz symmetry at each point. This would turn out in the existence of  anisotropies of the speed of light emitted from a single event $p$ in different directions. So, let us  focus on this possibility, which includes Lorentz-Minkowski norms in affine spaces.

\ben\item[(VSL2)] At an event $p\in M$, a single observer $v\in \Sigma_p$ finds distinct speeds of light at different directions at its {\em rest space} ($T_{v}\Sigma_p$ endowed with $g_v$). 
\end{enumerate}
At least from a purely geometric viewpoint, this could happen as follows.   The cone $\C_p$ will intersect the rest space $T_{v}\Sigma_p$ at some strongly convex  $(n-2)$-hypersurface $S_v$, say, {\em the sky observed by\footnote{Equally, the rest space and the sky  could be regarded as the hyperplane $T^0_{v}\Sigma$ parallel to $T_{v}\Sigma$ through the origin $0\in T_pM$ and  the projection $S^0_{v}$ of $S_{v}$ along the direction $v_p$ into $T^0_{v}\Sigma$, respectively. This is a usual identification in General Relativity, \cite{SaWu77}. } $v$,}  see Fig. \ref{Fig2}. 
\begin{figure}
	\centering
	\begin{tikzpicture}
	%\draw (-6,0) -- (-2,0);
	\draw%[dashed] 
	(-6.1,4.5) -- (-4,0) -- (-1.9,4.5);
	\draw[thick,color=purple] (-5.3,2.45) -- (-6.2,2) -- (-6.2,0.2) -- (-1.7,2.3) -- (-1.7,4) -- (-2.2,3.77);
	\draw[thick, color=blue] (-6,4.5) .. controls (-4,1) .. (-2,4.5);
	%\draw[color=blue] (-5,4) .. controls (-4,3.5) .. (-3,4);
	%\draw[->] (-4,0) -- (-4.4,2.04); 
	\draw[->] (-4,0) -- (-4,1.9); 
	%\draw (-6,0.3) -- (-2.78,2.63); %(-2,3.2);
	%\draw (-2,0.3) -- (-6,3.2);
	%\draw (-3.38,1.306) -- (-6,3.2);
	\draw (-4,4) ellipse (1.67cm and 0.5cm);
	\draw[rotate around={26:(-3.88,2)}] (-3.88,2) ellipse (0.943cm and 0.5cm);
	\draw[->,color=red]  (-4,1.9) -- (-3.3,2.6); 
	\draw[->,color=red]  (-4,1.9) -- (-3.3,1.8); 
	%\draw[->,color=red]  (-3.6,2.04) -- (-2.78,2.63); 
	%\node at (-4,-0.5)(s){$p_{-1}$};
	\node at (-4,-0.3)(s){$p$};
	\node at (-3.7,2.4)(s){$u_1$};
	\node at (-3.3,2.1)(s){$u_2$};
	%\node at (-6.2,3.2)(s){$l$};
	%\node at (-6.2,0.3)(s){$l'$};
	\node at (-4.2,1)(s){$v$};
	%\node at (-4.4,1.04)(s){$v$};
	\node at (-1.6,4.5)(s){$\C_p$};
	\node at (-2.9,4.7)(s){$\Sigma_p$};
	\node at (-2.6,2.3)(s){$S_v$};
	\node at (-1.6,1.8)(s){$T_v\Sigma_p$};
	
	%\node at (-4,2.2)(s){$e_0$};
	%\node at (-3.2,3.5)(s){$O_{p_{-1}}$};
	
	\end{tikzpicture}
	\caption{\label{Fig2} The directions $u_1,u_2\in S_v$ may have $g_v(u_1,u_1)\not=g_v(u_2,u_2)$ and, so, the observer $v$ could conclude $c_v(u_1)\not=c_v(u_2)$, i.e., the speed of light depends on the direction.
	}
\end{figure}
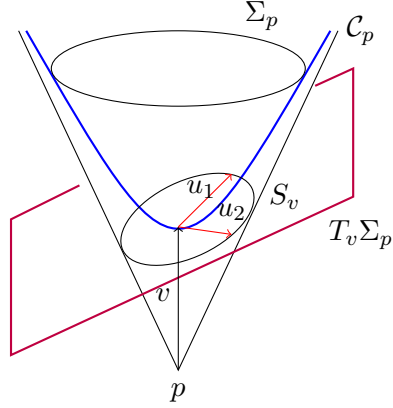
 Then, for $u\in S_v$ the value of $c_v(u):=\sqrt{g_v(u,u)}$ can be regarded as a $u$-dependent speed of light measured by $v$ (namely, the spacelike  length   covered by the light in the direction of $u$ in a unit of time).

\begin{rem}\label{r_vsl2} (1) This $u$-dependent speed of light appears because of the aniso\-tropies of $\Sigma_p$. Thus (in contrast to  (VSL1)) %\ref{i_luz1}, 
it might happen  even for a Lorentz-Finsler metric  compatible with the  cone  
structure %$\C_{LM}$ 
of Lorentz-Minkowski spacetime (or any other Lorentzian manifold).  

Indeed, at each $p\in M$, the metric $g_{v}$ depends on the space of observers $\Sigma_p$  close to $v$. So, if $\Sigma_p$ were the space of observers for the Lorentz-Minkowski metric $L$,  we could perturb it around some $v\in\Sigma_p$ in order to obtain the space of observers $\Sigma'_p$ of an anisotropically equivalent Lorentz-Finsler metric $L'$ satisfying: 
$$v\in \Sigma_p\cap \Sigma'_p, \qquad \hbox{and} \qquad T_{v}\Sigma_p = T_{v}\Sigma'_p.$$ Then, the skies of $v$ for $L$ and $L'$ are equal but, in general, 
$g_{v}\neq g'_{v}$ and $c_v(u)\neq c'_v(u)$.

(2) It is also worth pointing out that two different observers $v,v'\in \Sigma_p$ will span a single plane $\Pi  \subset T_pM$ which can be regarded as a timelike one for both $g_v$ and $g_{v'}$. The intersections of $\Pi$ with the rest spaces $T_v\Sigma_p$, 
$T_{v'}\Sigma_p$ will give two  lines  $l$ and $l'$ (which are spacelike for $g_v$ and $g_{v'}$, respectively).  Even though $l$ and $l'$ are different they would represent the ``spacelike direction where the other observers lies''. 
However,  the speed of light in the (consistently oriented) directions of $l$ and $l'$ may differ, that is, $c_v(u)\neq  c_{v'}(u')$ for  $u\in l$ and $u'\in l'$ (see  Fig. \ref{Fig3}). 
\end{rem}
\begin{figure}
	\centering
	\begin{tikzpicture}
	%\draw (-6,0) -- (-2,0);
	\draw%[dashed] 
	(-6.1,4.5) -- (-4,0) -- (-1.9,4.5);
	\draw[thick, color=blue] (-6,4.5) .. controls (-4,1) .. (-2,4.5);
	\draw[->] (-4,0) -- (-4.4,2.04); 
	\draw[->] (-4,0) -- (-3.6,2.04); 
	\draw (-6,0.3) -- (-2.78,2.63); %(-2,3.2);
	%\draw (-2,0.3) -- (-6,3.2);
	\draw (-3.38,1.306) -- (-6,3.2);
	\draw[->,color=red]  (-4.4,2.04) -- (-3.38,1.306); 
	\draw[->,color=red]  (-3.6,2.04) -- (-2.78,2.63); 
	%\node at (-4,-0.5)(s){$p_{-1}$};
	\node at (-3.13,1.4)(s){$u$};
	\node at (-2.58,2.63)(s){$u'$};
	\node at (-6.2,3.2)(s){$l$};
	\node at (-6.2,0.3)(s){$l'$};
	\node at (-3.6,1.04)(s){$v'$};
	\node at (-4.4,1.04)(s){$v$};
	\node at (-1.3,4.6)(s){$\C_p\cap \Pi$};
	\node at (-2.9,4.25)(s){$\Sigma_p\cap \Pi$};
	
	%\node at (-4,2.2)(s){$e_0$};
	%\node at (-3.2,3.5)(s){$O_{p_{-1}}$};
	
	\end{tikzpicture}
	\caption{\label{Fig3} In the plane $\pi$ spanned by the observers $v,v'$, the tangent lines to $\Sigma_p$ in $\pi$, $l$ and $l'$ differ. Then if $u\in S_v$ and $u'\in S_{v'}$, possibly, $g_v(u,u)\not=g_{v'}(u',u')$. So the observers $v,v'$ measure different speeds of light in their common plane~$\pi$.
	}
\end{figure}
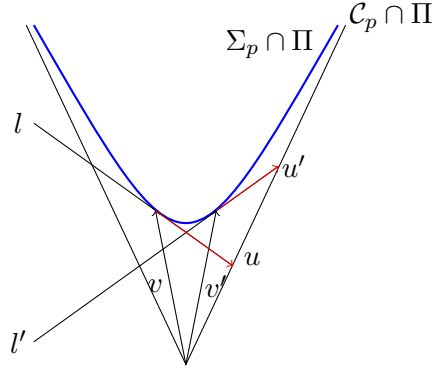

The possibility to measure (VSL2) might be somewhat na\"{\i}ve because: (a) experimental difficulties for the  measurements of the involved geometric elements  $g_v$, $S_v$ (or the relation between $g_v$ and $g_{v'}$), might appear,  and (b)~in the case that $\C$ were compatible with a Lorentzian metric, then one should speak on anisotropies of the space for massive particles (or, eventually, for measurement instruments) rather than for the propagation of light. 

Anyway, there is an {\em  anisotropic propagation of the light} in the case of a breaking of the conformal Lorentz symmetry, namely:
\ben\item[(VSL3)] At an event $p\in M$, the cone $\C_p$ is not compatible with any Lorentz scalar product. 
 \een
In principle, this could be measured by using the  trajectories of  lightrays  even in the case of Lorentz-Minkowski norms  on an affine space (so that the lightrays are straight lines).
Indeed, when $L$ comes from a Lorentz scalar product  $g$, then  $g_{v}$ 
 depends only on $p$ ($g_v\equiv g_p$), $S_{v}$ becomes a sphere  in 
$T_v\Sigma$ centered at 0 of radius $r=1$,  the second fundamental form $\sigma$ (with respect to the inner direction) of $S_v$ can be identified with the restriction of   $g_p/r^2$  to $S_{v}$ and the speed of light is regarded naturally as isotropic. However, in the case of a Lorentz-Minkowski norm $L$, the second fundamental form $\sigma_u$  at some $u\in S_{v}$
  may satisfy, for example,  $\sigma_u > g_{v}/ g_v(u,u)$  (as quadratic forms on $T_uS_v$). Then, the vectors of  $S_{v}$ 
%\footnote{As well as  at $S^0_{v}$ in the previous note.} 
close to $u$ can be regarded as ``shorter'' than those in the Lorentz metric case, that is:
\begin{quote}
{\em the speed of light measured by $v\in \Sigma_p$ at the direction $u\in S_v:=T_v\Sigma_p \cap \C_p$ is bigger (resp. smaller) than the speed of light in the directions close to $u$ when  $\sigma_u > g_{v}/g_v(u,u)$ (resp. $\sigma_u < g_{v}/g_v(u,u)$).}
\end{quote}
More precisely, if $\lambda$ ($>0$) is the eigenvalue of $\sigma_u$ in the direction $w\in T_uS_{v}$ then $1/\lambda$ 
%$\lambda$ 
would  rate the increasing of the speed  along the direction $w$. 

 We emphasize that the previous procedure would allow the observer $v$ to realize that an anisotropy holds either in $\Sigma_p$ or in $\C_v$. The fact that $g_v$ only depends  on the behavior of $\Sigma_p$ around $v$ prevents to disregard the first case. However:
\begin{quote}
{\em $\C_v$ is compatible with a Lorentz scalar product  if and only if $S_v$ is an ellipsoid, } 
\end{quote}
and the latter property can be checked in  purely affine terms on $T_v\Sigma$ (namely, it holds when it vanishes the cubic form $C(X,Y,Z)=\nabla_X \sigma^{\xi}(Y,Z)$\footnote{Observe that the cubic form coincides with the Matsumoto tensor of the pseudo-Minkowski norm having the affine hypersurface as indicatrix up to multiplication by a function (see for example \cite{MoHu10} or  \cite{JV}). 
The Matsumoto tensor is zero when the pseudo-Finsler metric comes from a scalar product. } 
 constructed from the second fundamental form $\sigma^{\xi}$ and the induced connection $\nabla$,  both  for the Blaschke normal $\xi$, see \cite[Theorem II.4.5]{NoSa94}.  %\br When a scalar product  has been prescribed (in particular, if one assumes that $g_v$ has been measured), this becomes equivalent to $\nabla \sigma\equiv 0$,  that is, $\sigma$ is a parallel tensor on $S_v$. \er\footnote{\br Esta frase en rojo no la entiendo. Donde se usa el producto escalar? \er}

\begin{rem}  The property $\nabla \sigma^{\xi}\not\equiv 0$  implies the intrinsic anisotropy of the speed of light, but it does not assign an ``absolute'' speed of light $c_{v}(u)$ (which would depend on the Lorentz metric $L$ as in (VSL2)). However, one has the possibility to measure variations on the speed of light around each  $u$. The qualitative behavior of such variations rely on the cone structure instead of the metric  (compare with Remark \ref{r_vsl2}). 
%\footnote{\br SUPONGO QUE ESTO ES CORRECTO \er as these  qualitatively properties depend on the relative position of the  section $S_{v}$ of $\C_p$ (obtained by using a hyperplane which is transverse to the radial direction) and affine spheres}. 
\end{rem}

\subsection{Matter as anisotropy and  Quantum Physics}\label{s6.4} Clearly, a  Lorentz-Minkowski norm or properly Lorentz-Finsler metric  would appear if some type of anisotropy were detected in physical spacetime (see for example \cite{PerlickFermat} and references therein).   However, we emphasize: 

\begin{quote}
{\em The existence of an anisotropy does not mean necessarily a ``pre-existing spacelike anisotropy of empty space''. Indeed, 
%Under our viewpoint, just 
the existence of matter induces anisotropies in causal directions, and this might be reflected in the indicatrix of $L$.} 
\end{quote}
  This possibility is stressed in our formalism, as $L$ is defined only on causal directions. 
Even though this idea is quite  speculative, let us explain it briefly.

Consider first that an  event $p\in M$ is crossed by  a particle  $\gamma$, $\gamma(0)=p$, with mass $m>0$. In this case, $\gamma'(0)$ selects a privileged direction at $p$, and this would introduce  
an anisotropy in the space of observers  $\So_p$ (with respect to a background Lorentz metric). This 	perturbation might be made quantitative in some ways; for example, by introducing a perturbation in the curvature of $\So_p$ around $p$ proportional to $m$. 
In the case of having a stress energy tensor $T$ in an initial background Lorentzian metric $g$, algebraic properties of $T$ (as the energy density or pressure for perfect fluids) might induce the perturbation of $\So_p$.

These perturbations, even if tiny, might have interest at Planck scale. 
Indeed, it is commonplace to assume that nonlinear modifications of linear 
Schr\"odinger equation might lead to an effective collapse which resolves the measurement problem (see for example \cite[\S 7]{FinsterKleiner}). So, the nonlinear framework of Finsler spacetimes opens possibilities  in this direction  which are worth to be studied further.

\begin{rem}
 Recent examples of Finslerian spacetimes, as 
the model of  relativistic kinetic gases in \cite{HPV0}, can be understood also under the above  viewpoint. Their authors explain that an ensemble of a large number of $P$ individual interacting and gravitating point particles can be described at three levels: 

(1) individual particles, 

(2) description as a kinetic gas, by using a 1-particle distribution function (1PDF), which retains information about velocities, and 

(3) description as a fluid, where velocities at each point are also averaged.  

\smallskip

\noindent That reference develops the second viewpoint, where a Lorentz-Finsler model emerges. However, one should take into account that, certainly, the individual particle description is the extreme idealization of the gas, as these particles should be quantum objects. So, the Lorentz-Finsler metric might be directly the most natural description as a semi-classical limit.   
 \end{rem}

%\newpage

%\includegraphics[height=.6\textheight]{./FigEsquemaObsLnEtc.pdf}
%\caption{\label{fig2*} 
%\begin{center}
%Structure spaces of observers and others
%\end{center}
%}

\section{Conclusions}
Along this article, we have obtained goals in the following three directions:

\smallskip 

\noindent
(1) A revision  of the foundations of the theories of non-quantum spacetime from the viewpoint of how space and time are being measured, carried out in three parts.

1a. In the first one (doubly linearized models, \S \ref{s2}) the previous approaches in this direction \cite{Ignato, BLS} have been sharpened and simplified, and the four compatible models of spacetime have been concisely described. In particular, we have introduced the hypothesis of {\em apparent temporality}. This hypothesis is enough to obtain the models with no additional hypotheses on, for example, group actions, Theorem \ref{t_k}. Moreover, it will yield time-orientability in three of the models (the temporal ones) and it will underlie our definition of Finsler spacetime, where the Lorentz-Finsler metric is defined only on the causal vectors of a single cone structure.
The other two parts consider their natural non-linear generalizations. 

1b. The  first non-linearization  \S \ref{s3} is carried out in the spirit of the generalization from Special to General Relativity. In a natural way, the previous four models lead to a signature-changing metric, with Leibnizian structures (and their dual) in the  degenerate part and to pointwise variations $c(p)$ of the speed of light which are briefly discussed. It is worth pointing out that, consistently with the discussion at the end of \S \ref{s2}, here $c(p)$ appears as the supremum of velocities between observers at each event $p$; however, it becomes identifiable with the speed of propagation of the light because it propagates in vacuum (and $c(p)$ is the unique common speed different to $0$ measurable by all the observers at $p$). 

1c. Focusing in the Relativistic case, the second non-linearization \S \ref{s4} is obtained just by removing the relativistic quadratic restriction (intrinsic to Lorentzian metrics) on the space of observers. This leads directly to our definition of Finsler spacetime. Its mathematical background and subtleties (including issues on differentiability specific to the Fisnler case which will be relevant later) are also introduced concisely. 

\smallskip 

\noindent
(2) A critical revision of EPS  approach \S \ref{s5}  with a triple aim.

2a. The first aim was to examine which EPS assumptions forbid non-relativistic Lorentz-Finsler metrics to emerge, taking into account previous studies \cite{Tavakol, LP}. We have found that these assumptions appear neatly at two steps (\S \ref {5.2.1}, \ref {5.2.2}) and they have the same origin: they impose certain conditions of $C^2$-differentiability at 0 (in each  tangent space $T_pM$) of some geometric quantities which, by its very nature, forbids any anisotropy and, mathematically,  leads to the quadratic restriction on the metric (the latter, essentially, by an elementary computation 
in \cite[Proposition 4.1]{Warner65}). Intuitively, this condition can be understood as follows: if one has any element in a vector space depending only on the direction (as the fundamental tensor of a non-Riemannian Finslerian metric or the Christoffel symbols of a non-affine anisotropic connection) then this element cannot be even continuous at 0, as this vector can be regarded as the limit of vectors coming from different directions. Of course, such a condition would not be reasonable from a 
mathematical viewpoint (it would exclude as non-smooth even all the analytic 
Finsler metrics) but also from a physical one. Indeed, it would be even  preferable to assume directly the isotropy in different directions as a 
physical assumption,  as such an isotropy might be natural  in some cases. In contrast, the assumption on $C^2$-differentiability at 0 a priori may be 
misleading and it interferes with the assumption on radar coordinates (which is regarded as involved by many authors, see for example recent \cite[footnote 7]{MatSch20}).  For the sake of completeness, we have also studied the Finslerian examples  which are compatible with the EPS axioms (including  $C^2$ differentiability at 0, as in \cite{Tavakol}) and discussed at what extent they contradict EPS conclusions, \S \ref{s_5.2.5}. 

2b. The second aim was to compare EPS, as well as the standard chronometric approach, with ours. As an important difference between the philosophies of the previous approaches and ours, our postulates  do not involve {\em the physical objects which will be measured} but {\em the way how we can measure} physical objects. Indeed, the possibility to make meaningful measurements of the physical spacetime relies on the existence of some mild symmetries among the observers, so that different measurements (carried out at different events and by different observers at each event) can be compared. As stressed here, such symmetries become then apparent in the observers space $\So$ and, then, allow one to determine some geometries for the physical spacetime. The fact that the exact symmetries of $\So$ in the initial linearized model may be only approximate, leads to General Relativity,  modified Special Relativity and the general model of Finsler spacetimes. 

2c. As an extra bonus of the previous two aims,  EPS approach can be also used to obtain Lorentz-Finsler metrics for the geometry of spacetime.
Indeed, removing the criticized hypotheses of $C^2$ smoothability,  any Lorent-Finsler metric $L$ will be compatible with the two first steps of EPS. The other two steps should justify the uniqueness of  $L$ up to an overall factor. 
These steps would be involved mathematically (indeed, the third one would have been justified for the original EPS approach one only recently \cite{MatSch20}). However, 
as suggested in \S \ref{s_5.2.4}, only the behavior of clocks would be enough to construct $\So$ and, then, to characterize $L$. Even though this behavior becomes natural in the chronometric approach rather than in EPS, the main objection of these authors 
to  chronometrics (part (\ref{EPSclaim_1}) in \S \ref{s.5.3}) would be solved.
It is also worth emphasizing that, in this way, our procedure becomes simple and rigorous at all the stages.

(3) A summary of some issues related to Lorentz symmetry breaking discussed from the introduced viewpoint. This includes:

3a. Very Special Relativity and Pfeifer \& Wohlfart (PW) definition of Finsler spacetimes $\S \ref{s_6.1_modifiedSpecialRelativity}$. They are particular cases of  Finsler spacetimes with non-smooth lightlike directions (and, so, they do not satisfy properly our definition of Lorentz-Finsler metric. However, they are endowed with a regular cone  structure $\C$ and an isotropic connection extendible to $\C$ and, so, most of their relevant properties hold (see Note \ref{note_improperLF}). The case of Bogoslovsky metric and its generalization to arbitrary manifolds is studied specifically. Moreover, the way to fit the phenomenon of birefringence in our setting is also discussed.

3b. Three ways to detect the possibility that the speed of light varied with the direction \S \ref{s6.3}. The first one would be a pointwise variation which would go beyond the one  discussed in General Relativity, which relies on the possibility that a cone structure %not satisfy %the  (Berwald-type) property 
has cones at different points non-affinely isomorphic. The other two ways focus on the Lorentz symmetry breaking at each point $p\in M$.  The first one is a geometric analysis which would detect the anisotropies of the Lorentz-Finsler metric $L$ (and, then, of the measured speed of light) in different situations, namely: when a single observer looks at different spacelike directions (Fig. \ref{Fig2}) and when two observers at $p$ compare their spacelike  measurements (Fig. \ref{Fig3}). Because of these anisotropies of $L$, the measured speeds of the light might be different even for a cone structure compatible with a quadratic (relativistic) cone. Thus, the  other procedure focuses on the specific properties of the cone and would detect its lack of quadraticity.

3c. A justification of Lorentz-Finsler anisotropy. Typically, Finslerian anisotropy is considered as a spacelike anisotropy. Notice, however, that our Lorentz-Finsler metrics are not even defined on spacelike directions. As extensively argued along the article, Lorentz-Finsler anisotropies appear on the space of observers. So, it is natural to think that they might be associated with the distribution of mass and energy. These might be anisotropic even if one thought that a ``background isotropic vacuum'' existed. In this vein,  a possible link with Quantum Mechanics is suggested and further developments on this issue might be worthy.

\smallskip

\noindent Summing up, this paper tries to provide physical grounds and precise mathematical formulations for the development of Lorentz-Finsler geometry and its relativistic applications. It is worth emphasizing that the applications, however, go beyond the relativistic setting. For example, an extra bonus has its roots 
in analogue gravity
\cite{BLV}. Indeed, the classical non-relativistic problem of Zermelo navigation
is better understood by using Lorentz-Finsler
metrics and the corresponding Fermat principle \cite{CJS14, JavSan18}. Then, on the one hand,
the classical Finslerian/Zermelo viewpoint has applications to spacetimes
\cite{CJS11, JS17proc} and, on the other, the Lorentz-Finsler viewpoint has applications for
issues such as the propagation of fire spreading, quantum navigation and
classical Finsler Geometry \cite{Markv16,Gib17,JS17}. 
So, Lorentz-Finsler geometry and its applications appears as a fascinating area to be developed further.

\section*{Acknowledgments}
 The authors warmly acknowledge  discussions on this topic with Prof.~ 
Vol\-ker Perlick (Univ. of Bremen), Christian Pfeifer (Univ. of Tartu)  and Nicoletta Voicu (Univ. of Brasov). 
   %Kostelecky (Indiana University), Fuster (University of Technology, Eindhoven), Stavrinos (University of Athens),
%and  Makhmali (Institute of Mathematics, Warsaw) their comments on a preliminary version of the article. 

   MAJ  was partially supported by MICINN/FEDER project with reference PGC2018-097046-B-I00  and Fundaci\'on S\'eneca project with reference 19901/GERM/15, Spain, and  MS by 
Spanish  MINECO/FEDER project reference MTM2016-78807-C2-1-P. 
This work is included in the framework of the Programme  of Excellence Groups of the Region of Murcia, Spain funded by Fundaci\'on S\'eneca, Science and Technology Agency of the Region of Murcia.

% . The second author was \er \sout{Partially} supported by the project MTM2016-78807-C2-1-P (Spanish MINECO and ERDF).

% \bibliographystyle{abbrv}
%\bibliography{mybib.bib}
%\end{document}

%\bibliographystyle{siam}
%\bibliography{mybib.bib}
\end{document}